\definecolor{ocre}{rgb}{0.72,0,0} 
\definecolor{newblue}{rgb}{0.2,0.2,0.6} 
\newenvironment{fminipage}%
  {\begin{Sbox}\begin{minipage}}%
  {\end{minipage}\end{Sbox}\fbox{\TheSbox}}
\newcommand{\remove}[1]{}
\newcommand{\rot}{\intercal}
\newcommand{\ce}{\mathrm{e}}
\newcommand{\tr}{\mathrm{tr}}
\renewcommand{\deg}{\mathrm{deg}}
\newcommand{\eps}{\epsilon}
\renewcommand{\leq}{\leqslant}
\renewcommand{\geq}{\geqslant}
\renewcommand{\le}{\leqslant}
\newcommand{\thmref}[1]{Theorem~\ref{thm:#1}}
\newcommand{\lemref}[1]{Lemma~\ref{lem:#1}}
\newcommand{\secref}[1]{Section~\ref{sec:#1}}
\newcommand{\eq}[1]{\eqref{eq:#1}}
\newcommand{\Pro}[1]{\mathbb{P} \left[\,#1\,\right]}
\newcommand{\TEx}[1]{\widetilde{\mathbb{E}} \left[\,#1\,\right]}
\newcommand{\Ex}[1]{\mathbb{E} \left[\,#1\,\right]}
\renewcommand{\tilde}{\widetilde}
\renewcommand{\epsilon}{\varepsilon}
\definecolor{ocre}{RGB}{150,22,11} 
\newtheorem{thm}{Theorem}[section]  
\newtheorem{lem}[thm]{Lemma}
\newtheorem{cor}[thm]{Corollary}
\newtheorem*{rem*}{Remark}
\newtheorem{pro}[thm]{Proposition}
\newtheorem{assumption}[thm]{Assumption}
\renewcommand{\tilde}{\widetilde}
\numberwithin{equation}{section}
\renewcommand{\vec}[1]{#1}
\newcommand{\mat}[1]{#1}
\title{{\bf Constructing Linear-Sized Spectral Sparsification\\
in Almost-Linear Time}}
\author{Yin Tat Lee\\
MIT\\ Cambridge, USA\\
\texttt{yintat@mit.edu}
\and
He Sun\\
The University of Bristol\\
Bristol, UK\\
\texttt{h.sun@bristol.ac.uk}
}
\author{Yin Tat Lee\\
MIT\\ Cambridge, USA\\
\texttt{yintat@mit.edu}
\and
He Sun\\
The University of Bristol\\
Bristol, UK\\
\texttt{h.sun@bristol.ac.uk}
}
\date{}
\begin{document}

\maketitle

\begin{abstract}
We present the first almost-linear time algorithm for constructing linear-sized spectral sparsification for graphs.  This improves all previous constructions of linear-sized spectral sparsification, which requires $\Omega(n^2)$ time~\cite{BSS12,Z12, zhu15}.

A key ingredient in our algorithm is a novel combination of two techniques used in literature for constructing spectral sparsification: Random sampling by
 effective resistance~\cite{journals/siamcomp/SpielmanS11}, and adaptive construction based on barrier functions~\cite{BSS12, zhu15}.
 
\vspace{0.5cm}

 \textbf{keywords:} algorithmic spectral graph theory, spectral sparsification

 \end{abstract}

\thispagestyle{empty}

\setcounter{page}{0}

\newpage

\section{Introduction}

Graph sparsification is the procedure of approximating a graph $G$ by a   sparse  graph $G'$  such that certain quantities between $G$ and $G'$ are preserved. For instance, \emph{spanners} are defined between two graphs in which the distances between any pair of vertices in these two graphs are approximately the same~\cite{Che89}; \emph{cut sparsifiers} are reweighted sparse graphs of the original graphs such that the weights of every cut between the sparsifiers and the original graphs are approximatedly the same~\cite{BK96}. Since both storing and processing  large-scale graphs are expensive, graph sparsification is  one of the most fundamental building blocks in designing fast graph algorithms, including solving Laplacian systems~\cite{spielman2004nearly,koutis2010approaching,koutis2011nearly,KoutisLP12,peng2014efficient,lee2015sparsified}, designing approximation algorithms for the maximum flow problem~\cite{BK96,KelnerLOS14,sherman2013nearly}, and solving streaming problems~\cite{kelner2013spectral,kapralov2014single}. Beyond graph problems, techniques developed for spectral sparsification are widely used in  randomized linear algebra~\cite{mahoney2011randomized,li2013iterative,cohen2015uniform}, sparsifying linear programs~\cite{lee2014path}, and  various pure mathematics problems~\cite{spielman2012elementary,srivastava2012contact,marcus2013interlacing,barvinok2014thrifty}.

In this work, we study spectral sparsification introduced by Spielman and Teng~\cite{spielman_teng:SS11}: A \emph{spectral sparsifier} is a reweighted \emph{sparse} subgraph of the original graph such that, for all real vectors, the Laplacian quadratic forms between that subgraph and the original graph are approximately the same. Formally,  for any undirected and weighted graph $G=(V,E,w)$ with $n$ vertices and $m$ edges, we call a subgraph $G'$ of $G$, with proper reweighting of the edges,  is a $(1+\varepsilon)-$\emph{spectral sparsifier} if it holds for any $x\in \mathbb{R}^n$ that
\[
(1-\eps)x^{\rot}L_Gx\leq x^{\rot}L_{G'}x\leq (1+\varepsilon) x^{\rot}L_Gx,
\]
where $L_G$ and $L_{G'}$ are the respective graph Laplacian matrices of $G$ and $G'$.

Spielman and Teng~\cite{spielman_teng:SS11} presented the first algorithm for constructing spectral sparsification. For any undirected graph $G$ of $n$ vertices, their algorithm runs in $O(n\log ^cn/\eps^2)$ time, for some big constant $c$, and produces a spectral sparsifier  with $O(n\log^{c'}n/\eps^2)$ edges for some $c'\geq 2$.  Since then, there has been a wealth of work  on spectral sparsification. For instance,
Spielman and Srivastava~\cite{journals/siamcomp/SpielmanS11} presented a nearly-linear time algorithm for constructing a spectral sparsifier of
$O(n\log n/\eps^2)$ edges.   Batson, Spielman and Srivastava~\cite{BSS12}
presented an algorithm for constructing spectral sparsifiers with $O(n/\eps^2)$ edges, which is optimal up to a  constant.
However, all previous constructions either require $\Omega\left(n^{2+\eps}\right)$ time in order to produce linear-sized sparsifiers~\cite{BSS12, Z12, zhu15}, or  $O(n\log^{O(1)} n/\eps^2)$ time but the number of edges in the sparsifiers is sub-optimal.

In this paper we present the first almost-linear time algorithm for constructing linear-sized spectral sparsification for graphs.   Our result is summarized as follows:

\begin{thm} \label{thm:graph}
Given any integer $q\geq 10$ and $0 < \eps\leq 1/120$.
Let $G=(V,E,w)$ be an undirected and weighted graph with $n$ vertices and $m$ edges.
Then, there is an algorithm that outputs a $(1+\eps)$-spectral sparsifier of $G$ with $O\left(\frac{qn}{\varepsilon^2}\right)$ edges. The algorithm runs in   $\tilde{O}\left(\frac{q\cdot m\cdot n^{5/q}}{\eps^{4+4/q}}\right)
$ time.
\end{thm}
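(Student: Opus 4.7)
The plan is to build the sparsifier over $q$ phases, where each phase reduces the edge count by roughly a factor of $n^{1/q}$ while blowing up the spectral distortion by $1+\tilde{O}(\eps/q)$. This multilevel schedule is what balances the two techniques the abstract advertises: pure effective-resistance sampling \`a la Spielman--Srivastava runs in $\tilde{O}(m)$ time but only reaches $O(n\log n/\eps^2)$ edges, while pure barrier-based BSS reaches $O(n/\eps^2)$ edges but needs $\Omega(n^2)$ time per step because each step adds a single edge after a global scan. Interpolating with $q$ levels yields both the target $O(qn/\eps^2)$ edges and a running time that scales polynomially in $n^{5/q}$; the exponent $5/q$ should emerge from the product of the per-phase geometric reduction, the $\tilde{O}(m)$ cost of one round of Laplacian solves, and the $\eps^{-O(1/q)}$ concentration slack.

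\textbf{Per-phase construction.} I would maintain a current sparsifier $H$ together with upper and lower BSS-style barriers
\[
\Phi^u(L_H) = \mathrm{tr}\bigl((uI - L_H)^{-1}\bigr), \qquad \Phi^\ell(L_H) = \mathrm{tr}\bigl((L_H - \ell I)^{-1}\bigr),
\]
with $(\ell,u)$ shifted gradually in the BSS manner. Rather than adding one edge per step, each phase samples a large batch of edges independently, edge $e=(s,t)$ of weight $w_e$ being kept with probability proportional to a barrier-weighted leverage score
\[
p_e \;\propto\; w_e \cdot \chi_{st}^{\rot}\bigl((uI - L_H)^{-1} + (L_H - \ell I)^{-1}\bigr)\chi_{st},
\]
where $\chi_{st}$ is the signed edge vector. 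This expression reduces to the ordinary effective resistance when the barriers are far from the spectrum and to the BSS ``weight'' in front of tight edges when they are close. Crucially, all $p_e$ can be computed up to a constant factor in $\tilde{O}(m)$ time by combining Johnson--Lindenstrauss sketches with fast Laplacian solvers, so a phase costs $\tilde{O}(m)$ plus the cost of the resulting batch.

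\textbf{Analysis.} The heart of the argument is to control the change in $\Phi^u$ and $\Phi^\ell$ after one batched update. Expanding with Sherman--Morrison, the linear-in-edges term cancels by the very choice of $p_e$, while the quadratic (variance) term can be bounded by a Matrix Bernstein inequality and stays small as long as the batch is slightly above $n/\eps^2$. Summing increments over the $q$ phases, together with the $(\ell,u)$ shifts, would then yield $(1-\eps)L_G \preceq L_{H_q} \preceq (1+\eps)L_G$, and tallying the phase-wise batch sizes gives the claimed $O(qn/\eps^2)$ edges. The key to avoiding an extraneous $\log n$ is that concentration is measured through the trace potentials $\Phi^{u,\ell}$ themselves, exactly as in BSS, rather than through the operator norm.

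\textbf{Main obstacle.} The hardest part, to my mind, is handling the \emph{adaptivity} of the sampling distribution: the $p_e$ in phase $i$ depend on the random $H_{i-1}$, so off-the-shelf matrix-concentration bounds do not directly apply. I would address this either via a conditional-expectation/martingale argument over phases or by decoupling phases through a union bound over a carefully-chosen net of barrier states. Two closely related subtleties must also be handled: (a) we can compute only approximate barrier-weighted leverage scores, so the analysis must be stable under a constant-factor overestimate of each $p_e$; and (b) edges that sit near the spectral boundary after a phase cannot be allowed to ``freeze'' future phases, which is what forces the barrier-shift schedule to be tuned as a function of $q$. Coordinating these three ingredients so that the final running time lands on the stated $\tilde{O}(qmn^{5/q}/\eps^{4+4/q})$ appears to be the main technical challenge of the paper.
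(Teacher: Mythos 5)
Your high-level intuition — combine barrier potentials with effective-resistance sampling, update in batches, and use Johnson--Lindenstrauss plus Laplacian solvers to estimate sampling weights — matches the paper's overall design, and your barrier-weighted leverage score is exactly the paper's ``relative effective resistance.'' But there are two substantive gaps that would sink the argument as you've sketched it.

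\textbf{Missing the power-$q$ potential.} You propose the classical BSS potentials $\Phi^{u}(L_H) = \tr\big((uI - L_H)^{-1}\big)$ and $\Phi^{\ell}(L_H) = \tr\big((L_H - \ell I)^{-1}\big)$. The paper instead uses $\Phi_{u,\ell}(A) = \tr(uI-A)^{-q} + \tr(A-\ell I)^{-q}$, and this is \emph{the} key idea that makes everything quantitative. With the power-$q$ potential, a bounded potential value $\Phi_{u,\ell}(A) \le \poly(n)$ forces a hard spectral gap $\lambda_{\min}(uI-A), \lambda_{\min}(A-\ell I) \ge \Phi_{u,\ell}(A)^{-1/q} = \Omega(n^{-O(1/q)})$, which is polynomially large (on the $n^{1/q}$ scale). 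This gap is what (a) permits batch sizes $N_j \approx n^{1-2/q}$ per round via Matrix Chernoff so that the total number of rounds is $O(qn^{3/q}/\eps^2)$ rather than $\Theta(n/\eps^2)$, and (b) makes the Taylor expansion of $(uI-A)^{-1/2}$ and $(A-\ell I)^{-1/2}$ converge in $\tilde{O}(1/\eta) = \tilde{O}(n^{2/q}/\eps^{2/q})$ terms, giving per-round cost $\tilde{O}(mn^{2/q}/\eps^{2+2/q})$ — not the $\tilde{O}(m)$ you claim. With the power-$1$ potential the spectral gap could shrink to $\Theta(1/n)$, the Taylor series would need degree $\Omega(n)$, and the per-round cost would degrade past $\Omega(mn)$, so no improvement over BSS would result. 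Relatedly, ``$q$ phases'' is not the right picture: the number of rounds depends on $n$ (it is $O(qn^{3/q}/\eps^2)$), and $q$ controls the \emph{potential exponent}, not the number of phases.

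\textbf{Operator-norm control of each batch is still necessary.} You claim ``concentration is measured through the trace potentials $\Phi^{u,\ell}$ themselves, exactly as in BSS, rather than through the operator norm.'' In fact the paper \emph{does} invoke a Matrix Chernoff bound for the operator norm of each batch $W$ to ensure $\mathbf{0}\preceq W \preceq \tfrac12(uI-A)$ with high probability; without this, the Sherman--Morrison/trace expansion of the potential (Lemma~\ref{lem:uplem1}) is not valid, because it needs each added rank-one piece to sit well inside the barriers. The trace potential controls \emph{progress} (how fast $u-\ell$ closes relative to $u$), while the Matrix Chernoff bound controls \emph{validity} (that no eigenvalue escapes after a batch). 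Both are needed, and the batch size $N_j$ is chosen precisely so the Chernoff bound applies; the power-$q$ potential feeds into that choice as well, by lower-bounding $\lambda_{\min}(uI-A)$. On the adaptivity issue you flag: the paper handles it by a restricted-expectation argument over rounds that conditions on the good event $W_j \preceq \tfrac12(u_jI-A_j)$, which is closer to your martingale suggestion than to a net over barrier states, and the union bound across rounds is over the Chernoff failure events, not over barrier configurations.
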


Graph sparsification is known as  a special case of sparsifying sums of rank-1 positive semi-definite~(\textsf{PSD}) matrices~\cite{BSS12,journals/siamcomp/SpielmanS11},  and our algorithm works in this general setting as well.  Our result is summarized  as follows:

\begin{thm}\label{thm:general}
Given any integer $q\geq 10$ and $0 < \eps\leq 1/120$.
Let $I=\sum_{i=1}^m v_iv_i^{\rot}$ be the sum of $m$ rank-1 \textsf{PSD}  matrices. Then, there is an algorithm
that outputs  scalers $\{s_i\}_{i=1}^m$ with $|\{s_i: s_i\neq0\}|=O\left(\frac{qn}{\eps^2}\right)$ such that
\[
(1-\eps)\cdot I\preceq \sum_{i=1}^m s_i v_i v_i^{\rot} \preceq (1+\eps) \cdot I.
\]
The algorithm runs in $\tilde{O}\left(\frac{q  m}{\eps^2}\cdot n^{\omega-1+3/q}\right)$ time, where $\omega$ is the matrix-multiplication constant.
\end{thm}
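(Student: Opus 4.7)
The plan is to marry the barrier-function potential framework of Batson-Spielman-Srivastava (BSS) with the leverage-score sampling approach of Spielman-Srivastava. Throughout the algorithm I would maintain a running PSD matrix $A=\sum_i s_i v_iv_i^{\rot}$ together with scalars $l<u$ that sandwich the spectrum, and the BSS potentials
\[
\Phi_u(A) \;=\; \tr\bigl((uI-A)^{-1}\bigr),\qquad \Phi_l(A)\;=\;\tr\bigl((A-lI)^{-1}\bigr).
\]
BSS show that one can increment $u,l$ by small amounts $\delta_U,\delta_L$ and add a single scaled rank-$1$ term $t\cdot v_iv_i^{\rot}$ per step while keeping both potentials bounded; after $\Theta(n/\eps^2)$ such steps, $u/l\le 1+O(\eps)$ and the resulting $A$ has $O(n/\eps^2)$ nonzero scalars. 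The bottleneck in BSS is that every step searches all $m$ candidate vectors for one that meets the barrier condition, costing $\Omega(mn)$ per step even with fast linear algebra.

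My approach is to replace the deterministic search by \emph{randomized batch} updates. Partition the BSS iteration into $O(q)$ outer phases. In each phase, with current $M_+=uI-A$ and $M_-=A-lI$, compute sampling probabilities
\[
p_i \;\propto\; v_i^{\rot}M_+^{-1}v_i \;+\; v_i^{\rot}M_-^{-1}v_i,
\]
draw $K=\tilde O(n/\eps^2)$ i.i.d.\ samples, and add each chosen $v_i$ with weight $\propto 1/(Kp_i)$ times a common scale. The resulting PSD increment $\Delta$ matches in expectation what a BSS single-step update would prescribe, so the expected change in each potential is bounded by the BSS inequalities; a matrix Bernstein argument then gives concentration around the expectation. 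After $O(q)$ phases the total number of additions is $O(qn/\eps^2)$, matching the claim. To hit the runtime, each phase estimates all $p_i$ by a Johnson-Lindenstrauss sketch: apply $M_+^{-1/2}$ and $M_-^{-1/2}$ to $O(\log n)$ random sign vectors and then estimate quadratic forms by inner products for all $m$ vectors via a dense matrix product in $\tilde O(mn^{\omega-1})$ time. The factors $M_\pm^{-1/2}$ are recomputed from scratch periodically at cost $O(n^\omega)$ and maintained between refreshes by Woodbury-style low-rank updates; balancing the refresh frequency against the slack one can afford in the barrier is what introduces the trade-off factor $n^{3/q}$.

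The hardest step will be the concentration argument for $\Phi_u$ and $\Phi_l$ under batched random sampling. A matrix-Chernoff bound on $A+\Delta$ is too blunt because the potentials depend nonlinearly on $\Delta$: one has to expand
\[
\bigl((u+\delta_U)I-A-\Delta\bigr)^{-1} \;=\; M_+^{-1}+M_+^{-1}\Delta M_+^{-1}+M_+^{-1}\Delta M_+^{-1}\Delta M_+^{-1}+\cdots
\]
and show that the trace of each higher-order term is controlled by the barrier slack $\delta_U\Phi_u$. The first-order term matches expectation by design of the $p_i$'s; the dominant error is a stochastic second-order term whose variance scales like $K^{-1}\sum_i p_i^{-1}(v_i^{\rot}M_+^{-2}v_i)^2$, and bounding this by the available slack is precisely what forces $K=\tilde O(n/\eps^2)$. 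Coupling this with the same argument for $\Phi_l$ on the shared sample, propagating Woodbury approximation errors through the potentials, and applying a union bound over the $O(q)$ phases should yield the theorem together with the claimed runtime $\tilde O(qmn^{\omega-1+3/q}/\eps^2)$.
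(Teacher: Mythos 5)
Your high-level plan --- randomized, batched BSS updates with sampling probabilities proportional to $v_i^{\rot}M_+^{-1}v_i + v_i^{\rot}M_-^{-1}v_i$ --- is indeed the skeleton of the paper's algorithm. But there is a genuine gap at the heart of your analysis: you keep the \emph{first-order} barrier potentials $\tr((uI-A)^{-1})$ and $\tr((A-\ell I)^{-1})$, and these are too weak to support batches of size $K=\tilde O(n/\eps^2)$ in only $O(q)$ phases. A bound $\tr((uI-A)^{-1})\leq\eps$ only guarantees $\lambda_{\min}(uI-A)\geq 1/\eps$, while $u=\Theta(n/\eps)$; i.e., the guaranteed slack is a $\Theta(1/n)$ \emph{fraction} of the barrier scale. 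A batch of $K$ samples adds $\frac{K\eps}{t}I$ in expectation (with $t$ the potential value), so keeping the batch spectrally dominated by $uI-A$ --- which is necessary both for your matrix-Bernstein step and for the convergence of the Neumann series $\sum_k(M_+^{-1}\Delta)^k$ you expand --- forces $K\lesssim t\cdot\lambda_{\min}(uI-A)/\eps$, which in the worst case is only $O(1/\eps)$. With batches that small you need $\Omega(n/\eps)$ recomputations of the probabilities and gain nothing over BSS. When $\Delta\not\prec M_+$ the expansion simply diverges, so the "second-order variance" argument cannot be repaired by taking more samples; more samples make it worse.

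The missing ingredient is the high-order potential $\Phi_{u,\ell}(A)=\tr(uI-A)^{-q}+\tr(A-\ell I)^{-q}$ (from Allen-Zhu--Liao--Orecchia), which the paper uses with barriers initialized at $\pm(2n)^{1/q}$. Keeping this potential $O(1)$ forces $\lambda_{\min}(uI-A)\geq\Phi^{-1/q}=\Omega(1)$, i.e., a $n^{-1/q}$ fraction of the barrier scale rather than $1/n$. This is exactly what permits batches of size roughly $n^{1-2/q}$ (chosen so that matrix Chernoff gives $W\preceq\frac12(uI-A)$ w.h.p.), after which the batch is analyzed as a \emph{sequence} of rank-one updates via the $q$-th-order analogue of the BSS one-step lemma, with the half-slack absorbing the drift of the intermediate matrices. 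Consequently the number of probability recomputations is $\tilde O(qn^{3/q}/\eps^2)$ --- this, times the $\tilde O(mn^{\omega-1})$ cost of estimating all $m$ quadratic forms per round, is where the $n^{3/q}$ in the runtime comes from. Your attribution of that factor to the refresh schedule of Woodbury-maintained factors $M_\pm^{-1/2}$ is not the actual source, and your phase structure ($O(q)$ phases of $n/\eps^2$ samples each) does not match any analysis that can be made to close.
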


A key ingredient in our algorithm is a novel combination of two techniques used in literature for constructing spectral sparsification: Random sampling by
 effective resistance of edges~\cite{journals/siamcomp/SpielmanS11}, and adaptive construction based on barrier functions~\cite{BSS12, zhu15}. We will present an overview of the algorithm, and the intuitions behind it in \secref{algo}.

\paragraph{Preliminaries}

Let $G=(V,E,w)$ be a connected,   undirected and weighted  graph with $n$ vertices and $m$ edges, and weight function $w: V\times V\rightarrow \mathbb{R}_{\geq 0}$. The Laplacian matrix of $G$ is an $n$ by $n$ matrix $L$ defined by
\begin{equation} \nonumber
L_G(u,v)=\left\{ \begin{aligned}
         -w(u,v) & \qquad \mbox{if $u\sim v$,} \\
         \deg(u) & \qquad \mbox{if $u=v$,} \\
                 0&\qquad \mbox{otherwise,}
                          \end{aligned} \right.
                          \end{equation}
where $\deg(u)=\sum_{v\sim u} w(u,v)$.
It is easy to see that
\[
x^{\rot}L_G x=\sum_{u\sim v} w_{u,v}(x_u-x_v)^2 \geq 0,
\]
for any $x\in \mathbb{R}^n$.

For any matrix $A$, let $\lambda_{\max}(A)$ and $\lambda_{\min}(A)$ be the maximum and minimum eigenvalues of $A$. The condition number of matrix $A$ is defined by $\lambda_{\max}(A)/\lambda_{\min}(A)$.
For any two matrices $A$ and $B$, we write $A\preceq B$ to represent $B-A$ is positive semi-definite~(\textsf{PSD}), and $A\prec B$ to represent $B-A$
 is positive definite. For any two matrices $A$ and $B$ of equal dimensions, let $A\bullet B\triangleq \tr\left(A^{\rot}B\right)$. 
For any function $f$, we write $\tilde{O}(f) \triangleq O(f\cdot\log^{O(1)}f)$. For matrices $A$ and $B$, we write $A\approx_{\eps} B$ if
$
(1-\eps)\cdot A\preceq B\preceq (1+\eps) A$.

\section{Algorithm\label{sec:algo}}
We  study the algorithm of sparsifying the sum of rank-1  \textsf{PSD} matrices in this section.
Our  goal is to, for any vectors $v_1,\cdots v_m$ with $\sum_{i=1}^m v_iv_i^{\rot}=I$, find scalars $\{s_i\}_{i=1}^m$ satisfying 
$$|\{s_i : s_i\neq 0\}|=O\left(\frac{qn}{\varepsilon^2}\right),$$ such that
\[
(1-\eps)\cdot I\preceq \sum_{i=1}^m s_i v_i v_i^{\rot} \preceq (1+\eps)\cdot I.
\]
We will  use this algorithm to construct graph sparsifiers in Section~\ref{sec:analysis}.

\subsection{Overview of Our Approach}

\label{sec:BSS}

Our construction is based on a probabilistic view of the algorithm presented in Batson et al.~\cite{BSS12}. We  refer their algorithm \textsf{BSS} for short, and give a brief overview of the \textsf{BSS} algorithm at first.

  At a high level, the \textsf{BSS} algorithm proceeds by iterations, and adds a rank-1 matrix $c\cdot v_iv_i^{\rot}$ with some scaling factor $c$ to the currently constructed matrix $A_j$ in iteration $j$. To control the spectral properties of matrix $A_j$, the algorithm maintains two barrier values $u_j$ and $\ell_j$, and initially
     $u_0>0, \ell_0<0$. It was proven that one can always find a vector in $\{v_i\}_{i=1}^m$ and update $u_j,\ell_j$ in a proper manner in each iteration, such that the invariant
  \begin{equation}
\ell_j\mat I\prec\mat A_j\prec u_j\mat I\label{eq:app_inv}
\end{equation}
always holds, \cite{BSS12}. To guarantee this, Batson et al.~\cite{BSS12} introduces a potential function
\begin{equation}\label{eq:potenBSS}
\Phi_{u,\ell}(\mat A)\triangleq\tr(u\mat I-\mat A)^{-1}+\tr(\mat A-\ell\mat I)^{-1}
\end{equation}
to measure ``how far the eigenvalues of $A$ are from the barriers $u$ and $\ell$", since a small value of $\Phi_{u,\ell}(A)$ implies that no eigenvalue of $A$ is close to $u$ or $\ell$. With the help of the potential function, it was proven that, after $k=\Theta\left(n/\eps^2\right)$ iterations, it holds that $\ell_k\geq cu_k$ for some constant $c$, implying that the resulting matrix $A_k$ is a linear-sized and $A_k\approx_{O(\eps)} I$.

The original \textsf{BSS} algorithm is deterministic, and in each iteration the algorithm finds a rank-1 matrix which maximizes certain quantities. To informally explain our algorithm, let us look at the following randomized variant of the \textsf{BSS} algorithm: In each iteration, we choose a vector $v_i$ with probability $p_i$,  and  add a rank-1 matrix $$\Delta_A\triangleq \frac{\eps}{t}\cdot \frac{1}{p_i}\cdot v_iv_i^{\rot}$$ to the current matrix $A$. See Algorithm~\ref{algo:simple_BSS} for formal description.

\begin{algorithm}
\caption{Randomized \textsf{BSS} algorithm\label{algo:simple_BSS}}
\begin{algorithmic}[1]

\State $j=0$;

\State $\ell_{0}=-8n/\varepsilon$, $u_{0}=8n/\varepsilon$;

\State
$\mat A_{0}=\mat \mathbf{0}$;

\While{$u_j - \ell_j < 8n/\varepsilon$}

\State Let $t=\tr\left(u_{j}\mat I-\mat A_{j}\right)^{-1}+\tr\left(\mat A_{j}-\ell_{j}\mat I\right)^{-1}$;

\State Sample a vector $v_i$ with probability $p_{i}\triangleq\left(v_{i}^{\rot}\left(u_{j}\mat I-\mat A_{j}\right)^{-1}v_{i}+v_{i}^{\rot}\left(\mat A_{j}-\ell_{j}\mat I\right)^{-1}v_{i}\right)/t$;

\State $\ensuremath{\mat A_{j+1}=\mat A_{j}+\frac{\varepsilon}{t}\cdot \frac{1}{p_i}\cdot v_{i}v_{i}^{\rot}}$;

\State $\ensuremath{u_{j+1}= u_{j}+\frac{\varepsilon}{t\cdot (1-\varepsilon)}}$
and $\ell_{j+1}=\ell_{j}+\frac{\varepsilon}{t \cdot (1+\varepsilon)}$;

\State $j\leftarrow j+1$;

\EndWhile

\State \textbf{Return} $\mat A_j$;

\end{algorithmic}
\end{algorithm}

Let us look at any fixed iteration $j$, and analyze
how the added $\Delta_A$ impacts the potential function.
 We drop the subscript representing the iteration $j$ for simplicity.  After adding $\Delta_A$, the
  first-order approximation of $\Phi_{u,\ell}(\mat A)$
gives that
\begin{equation}
\Phi_{u,\ell}(\mat A+\Delta_{\mat A})\sim\Phi_{u,\ell}(\mat A)+ \left(u\mat I-\mat A\right)^{-2}\bullet\Delta_{\mat A} - \left(\mat A-\ell\mat I\right)^{-2}\bullet\Delta_{\mat A}.\label{eq:BSS_simple_first_order}
\end{equation}
Since
\begin{eqnarray*}
\mathbb{E}\left[\Delta_{\mat A}\right] & =\sum_{i=1}^mp_{i}\cdot \left(\frac{\varepsilon}{t}\cdot  \frac{1}{p_i}\cdot v_{i}v_{i}^{\rot}\right)
  =  \frac{\varepsilon}{t}\cdot \sum_{i=1}^mv_{i}v_{i}^{\rot}=\frac{\varepsilon}{t}\cdot  I,
\end{eqnarray*}
we have that
\begin{eqnarray*}
\mathbb{E}\left[\Phi_{u,\ell}(\mat A+\Delta_{\mat A})\right] & \sim & \Phi_{u,\ell}(\mat A)+
\frac{\varepsilon}{t}\cdot 
\left(u\mat I-\mat A\right)^{-2}\bullet I -\frac{\varepsilon}{t}\cdot  \left(\mat A-\ell\mat I\right)^{-2}\bullet I \\
 & = & \Phi_{u,\ell}(\mat A)+\frac{\varepsilon}{t}\cdot \tr\left(u\mat I-\mat A\right)^{-2}-\frac{\varepsilon}{t}\cdot \tr\left(\mat A-\ell\mat I\right)^{-2}\\
 & = & \Phi_{u,\ell}(\mat A)-\frac{\varepsilon}{t}\cdot \frac{\mathrm{d}}{\mathrm{d}u}\Phi_{u,\ell}(\mat A)-\frac{\varepsilon}{t}\cdot \frac{\mathrm{d}}{\mathrm{d}\ell}\Phi_{u,\ell}(\mat A).
\end{eqnarray*}
Notice that  if we increase $u$ by $\frac{\varepsilon}{t}$ and $\ell$
by $\frac{\varepsilon}{t}$, $\Phi_{u,\ell}$ approximately increases
by
\[
\frac{\varepsilon}{t}\cdot\frac{\mathrm{d}}{\mathrm{d}u}\Phi_{u,\ell}(\mat A)+\frac{\varepsilon}{t}\cdot\frac{\mathrm{d}}{\mathrm{d}\ell}\Phi_{u,\ell}(\mat A).
\]
Hence, comparing $\Phi_{u+\eps/t, \ell+\eps/t}(A+\Delta_A)$ with $\Phi_{u,\ell}(A)$, the increase of the potential function due to the change of
barrier values is approximately compensated by the drop of the potential function by the effect of  $\Delta_A$.
For a more rigorous analysis, we  need to  look at the higher-order terms
and increase $u$ slightly more than $\ell$ to compensate that. Batson
et al.~\cite{BSS12} gives the following estimate:

\begin{lem}[\cite{BSS12}, proof of Lemma 3.3 and 3.4]
\label{lem:BSSlemma}
Let $A\in\mathbb{R}^{n\times n}$, and $u,\ell$ be parameters satisfying
$\ell I\prec A\prec uI$. Suppose that $w\in\mathbb{R}^{n}$ satisfies
$ww^{\rot}\preceq\delta(uI-A)$ and $ww^{\rot}\preceq\delta(A-\ell I)$
for some $0<\delta<1$. Then, it holds that
\[
\Phi_{u,\ell}(A+ww^{\rot})\leq\Phi_{u,\ell}(A)+\frac{w^{\rot}(uI-A)^{-2}w}{1-\delta}
 -\frac{w^{\rot}(A-\ell I)^{-2} w}{1+\delta}.
\]

\end{lem}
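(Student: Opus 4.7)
The plan is to handle the two summands of $\Phi_{u,\ell}$ separately using the Sherman--Morrison identity, then use the two PSD hypotheses to control the scalar denominators that arise. For the upper-barrier term I would write $uI - (A+ww^{\rot}) = (uI-A) - ww^{\rot}$ and apply Sherman--Morrison to obtain
\[
(uI-A-ww^{\rot})^{-1} = (uI-A)^{-1} + \frac{(uI-A)^{-1} ww^{\rot} (uI-A)^{-1}}{1 - w^{\rot}(uI-A)^{-1}w}.
\]
Taking trace and using cyclic invariance gives $\tr(uI-A)^{-1} + \frac{w^{\rot}(uI-A)^{-2}w}{1 - w^{\rot}(uI-A)^{-1}w}$. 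The analogous expansion applied to $(A-\ell I) + ww^{\rot}$, where the rank-one term is now added rather than subtracted, produces $\tr(A-\ell I)^{-1} - \frac{w^{\rot}(A-\ell I)^{-2}w}{1 + w^{\rot}(A-\ell I)^{-1}w}$; the sign flip comes purely from the Sherman--Morrison formula.

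Next I would convert the two operator inequalities into scalar bounds. Conjugating $ww^{\rot} \preceq \delta(uI-A)$ by $(uI-A)^{-1/2}$ produces a rank-one PSD matrix whose unique nonzero eigenvalue equals $w^{\rot}(uI-A)^{-1}w$, so this scalar is at most $\delta$; the identical argument applied to the lower-barrier hypothesis gives $w^{\rot}(A-\ell I)^{-1}w \leq \delta$. In particular $1 - w^{\rot}(uI-A)^{-1}w \geq 1-\delta > 0$, which both certifies that $uI - A - ww^{\rot}$ remains positive definite (so $\Phi_{u,\ell}(A+ww^{\rot})$ is well defined and the Sherman--Morrison step is legitimate) and lets me upper-bound the added upper-barrier term by $\frac{w^{\rot}(uI-A)^{-2}w}{1-\delta}$. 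Symmetrically, $1+w^{\rot}(A-\ell I)^{-1}w \leq 1+\delta$ forces the subtracted lower-barrier fraction to be at least $\frac{w^{\rot}(A-\ell I)^{-2}w}{1+\delta}$, so negating it yields the upper bound $-\frac{w^{\rot}(A-\ell I)^{-2}w}{1+\delta}$ on that summand as well.

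Combining the two trace estimates with $\Phi_{u,\ell}(A) = \tr(uI-A)^{-1} + \tr(A-\ell I)^{-1}$ then gives exactly the inequality in the lemma. The only step that needs any care is tracking the directions of the two denominator replacements, since the upper-barrier contribution is added while the lower-barrier one is subtracted and the denominator inequalities point in opposite directions; once the signs are handled correctly, both estimates align into a single upper bound. Beyond that, the proof reduces to Sherman--Morrison plus the one-line rank-one eigenvalue observation, so I do not anticipate any substantive obstacle.
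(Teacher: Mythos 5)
Your proof is correct and follows essentially the same route as the source the paper cites for this lemma (and as the paper's own appendix proof of the $\tr(\cdot)^{-q}$ generalization, \lemref{uplem1}): Sherman--Morrison applied separately to $(uI-A)-ww^{\rot}$ and $(A-\ell I)+ww^{\rot}$, followed by the observation that the hypotheses $ww^{\rot}\preceq\delta(uI-A)$ and $ww^{\rot}\preceq\delta(A-\ell I)$ give $w^{\rot}(uI-A)^{-1}w\leq\delta$ and $w^{\rot}(A-\ell I)^{-1}w\leq\delta$, which bounds the two denominators in the required directions. All steps, including the justification that $uI-A-ww^{\rot}$ stays positive definite, are sound.
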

The estimate above shows that the first-order approximation \eq{BSS_simple_first_order}  is good if $ww^{\rot}\preceq\delta(uI-A)$
and $ww^{\rot}\preceq\delta(A-\ell I)$ for small $\delta$. It is
easy to check that, by setting $\delta=\eps$, the added matrix $\Delta_A$ satisfies these two conditions, since \begin{eqnarray*}
\frac{\varepsilon}{t}\cdot \frac{1}{p_i}\cdot v_{i}v_{i}^{\rot} & = & \frac{\varepsilon\cdot v_{i}v_{i}^{\rot}}{v_{i}^{\rot}\left(u\mat I-\mat A\right)^{-1}v_{i}+v_{i}^{\rot}\left(\mat A-\ell\mat I\right)^{-1}v_{i}}\preceq  \frac{\varepsilon\cdot v_{i}v_{i}^{\rot}}{v_{i}^{\rot}\left(u\mat I-\mat A\right)^{-1}v_{i}} \preceq \varepsilon\left(u\mat I-\mat A\right),
\end{eqnarray*}
where we used the fact that $v v^\rot \preceq (v^\rot B^{-1} v) B$ for any vector $v$ and \textsf{PSD} matrix $B$. Similarly, we have that
\[
\frac{\eps}{t}\cdot\frac{1}{p_i}\cdot v_i v_i^{\rot}\preceq \eps(A-\ell I).
\]
Hence, if $\Phi_{u,\ell}(A)$ is small initially,  our crude calculations
above gives a good approximation and $\Phi_{u,\ell}(A)$ is small
throughout the executions of the  whole algorithm. Up to a  constant factor, this gives the same
result as \cite{BSS12}, and therefore Algorithm~\ref{algo:simple_BSS} constructs  an $\Theta(n/\varepsilon^{2})$-sized $(1+O(\varepsilon))$-spectral  sparsifier.

Our algorithm follows the same framework as  Algorithm~\ref{algo:simple_BSS}. However, to construct a spectral sparsifier in almost-linear time, we expect that the sampling probability $\{p_i\}_{i=1}^m$ of vectors (i) can be approximately computed fast, and (ii) can be further ``reused" for a few iterations.

For fast approximation of the sampling probabilities, we adopt the idea proposed  in \cite{zhu15}:  Instead of defining the potential function by \eq{potenBSS}, we define the potential function by
\[
\Phi_{u,\ell}(\mat{A}) \triangleq\tr(u\mat{I}-\mat{A})^{-q} + \tr(\mat{A}-\ell\mat{I})^{-q}.
\]
Since $q$ is a large constant, the value of the potential function becomes larger when some eigenvalue of $A$ is close to $u$ or $\ell$.
Hence, a bounded value of $\Phi_{u,\ell}(A)$ insures that the eigenvalues of $A$ never get too close to $u$ or $\ell$, which further allows us to compute the sampling probabilities $\{p_i\}_{i=1}^m$ efficiently simply by Taylor expansion.
 Moreover, by defining the potential function based on
 $\tr(\cdot)^{-q}$, one can prove a similar result
as Lemma \ref{lem:BSSlemma}. This
gives an alternative analysis of the algorithm presented in \cite{zhu15},
which is the first almost-quadratic time algorithm for constructing
linear-sized spectral sparsifiers.

To ``reuse" the sampling probabilities, we re-compute  $\{p_i\}_{i=1}^m$  after every  $\Theta\left(n^{1-1/q}\right)$ iterations: We show that as long as the sampling probability
satisfies
\[
p_{i}\geq C\cdot \frac{v_{i}^{\rot}\left(u\mat I-\mat A\right)^{-1}v_{i}+v_{i}^{\rot}\left(\mat A-\ell\mat I\right)^{-1}v_{i}}{\sum_{i=1}^m\left(v_{i}^{\rot}\left(u\mat I-\mat A\right)^{-1}v_{i}+v_{i}^{\rot}\left(\mat A-\ell\mat I\right)^{-1}v_{i}\right)}
\]
for some constant $C>0$, we can still sample $v_i$  with probability  $p_{i}$ and get the same guarantee on the potential function. The reason is as follows:
Assume that  $\Delta_{\mat A}=\sum_{i=1}^{T}\Delta_{\mat A,i}$  is the sum of the sampled matrices  within  $T=O\left(n^{1-1/q}\right)$  iterations. If a randomly chosen matrix $\Delta_{A,i}$ satisfies $\Delta_{\mat A,i}\preceq\frac{1}{Cq}\left(u\mat I-\mat A\right)$,
then by the matrix Chernoff bound $\Delta_{\mat A}\preceq\frac{1}{2}\left(u\mat I-\mat A\right)$ holds with
high probability.
By scaling every sampled rank-1 matrix $q$ times smaller, the sampling probability only changes by a constant factor within $T$ iterations.
Since we  choose $\Theta(n/\varepsilon^{2})$ vectors in total, our algorithm only recomputes the sampling probabilities $\Theta\left(n^{1/q}/\varepsilon^{2}\right)$ times. Hence, our algorithm runs in almost-linear time if $q$ is a large constant.

\subsection{Algorithm Description}

The algorithm follows the same framework as Algorithm~\ref{algo:simple_BSS}, and proceeds by iterations. Initially, the algorithm  sets  \[
u_0 \triangleq (2n)^{1/q}, \qquad \ell_0\triangleq  -(2n)^{1/q}, \qquad  A_0\triangleq\mathbf{0}.
\]
After  iteration $j$  the algorithm updates $u_j,\ell_j$ by $\Delta_{u,j}, \Delta_{\ell,j}$ respectively, i.e.,
\[
u_{j+1} \triangleq u_j+\Delta_{u,j}, \qquad \ell_{j+1} \triangleq \ell_j+\Delta_{\ell,j},
\]
and
updates  $A_j$ with respect to the chosen matrix in iteration $j$.  The  choice of $\Delta_{u,j}$ and $\Delta_{\ell,j}$ insures that
\[
\ell_j I\prec A_j\prec u_j I
\]
holds
for any $j$. In iteration $j$, the algorithm computes the \emph{relative effective resistance} of vectors $\{v_i\}_{i=1}^m$ defined by
\[
R_i\left(\mat{A}_j,u_j,\ell_j\right) \triangleq \vec{v}_i^{\rot} \left(u_j\mat{I}-\mat{A}_j\right)^{-1}\mat{v}_i + \vec{v}_i^{\rot}\left(\mat{A}_j-\ell_j \mat{I}\right)^{-1}\vec{v}_i,
\]
and samples $N_j$ vectors independently with replacement, where
 vector $v_i$ is chosen with probability proportional to $R_i(A_j,u_j,\ell_j)$, and
\[
N_j\triangleq \frac{1}{n^{2/q}} \left(\sum_{i=1}^m R_i(A_j,u_j,\ell_j) \right)
\min\left\{\lambda_{\min}(u_jI-A_j),\lambda_{\min}(A_j-\ell_j I)\right\}.
\]
The algorithm  sets $A_{j+1}$ to be the sum of $A_j$ and sampled $v_iv_i^{\rot}$ with proper reweighting.
For technical reasons, we define  $\Delta_{u,j}$ and $\Delta_{\ell,j}$ by
\[
\Delta_{u,j}\triangleq (1+2\varepsilon)\cdot\frac{\varepsilon\cdot N_j}{q\cdot \sum_{i=1}^m R_i(A_j,u_j,\ell_j)}, \qquad
\Delta_{\ell,j}\triangleq (1-2\varepsilon)\cdot\frac{\varepsilon\cdot N_j}{q\cdot \sum_{i=1}^m R_i(A_j,u_j,\ell_j)}.
\]
See Algorithm~\ref{aaa} for formal description.

\begin{algorithm}
\caption{Algorithm for constructing spectral sparsifiers\label{aaa}}
\begin{algorithmic}[1]
\Require $\eps \leq 1/120, q \geq 10$
\State $j=0$;
\State $ \ell_0=-(2n)^{1/q}, u_0=(2n)^{1/q}, A_0=\mathbf{0}$;
\While{$u_j-\ell_j <4\cdot(2n)^{1/q}$}
\State $W_j=\mathbf{0}$;
\State Compute $R_i(A_j, u_j,\ell_j)$ for all vectors $v_i$;
\State Sample $N_j$ vectors independently with replacement, where every $v_i$ is chosen with probability proportional to $R_i(A_j, u_j,\ell_j)$. For every sampled $v$, add $\eps/q\cdot (R_{i}(A_j,u_j,\ell_j ))^{-1}\cdot vv^{\rot}$ to $W_j$;
\State $A_{j+1}=A_j+W_j$;
\State $u_{j+1}= u_j+\Delta_{u,j}$, $\ell_{j+1}=\ell_j+\Delta_{\ell,j}$;
\State $j=j+1$;
\EndWhile
\State \textbf{Return} $A_j$;
\end{algorithmic}
\end{algorithm}

We remark that, although exact values of $N_j$ and relative effective resistances are difficult to compute in almost-linear time,
we can use approximated values of $R_i$ and $N_j$ instead. It is easy to see that  in each iteration an over estimate of $R_i$, and an under estimate of  $N_j$ with constant-factor approximation suffice for our purpose.

\section{Analysis}\label{sec:analysis}


We analyze Algorithm~\ref{aaa}
in this section.  To make the calculation less messy, we assume the following:

\begin{assumption}
We always assume that $0<\eps \leq 1/120$, and $q$ is an integer satisfying $q\geq 10$.
\end{assumption}

Our analysis is based on a potential function $\Phi_{u,\ell}$ with barrier values $u,\ell\in\mathbb{R}$.  Formally, for  a symmetric matrix $\mat{A}\in\mathbb{R}^{n\times n}$
with eigenvalues $\lambda_1\leq\cdots\leq\lambda_n$
and parameters $u,\ell$ satisfying $\ell I\prec A\prec uI$, let
\begin{align}
\Phi_{u,\ell}(\mat{A}) &\triangleq\tr(u\mat{I}-\mat{A})^{-q} + \tr(\mat{A}-\ell\mat{I})^{-q} \nonumber\\
&= \sum_{i=1}^n\left(\frac{1}{u-\lambda_i} \right)^q + \sum_{i=1}^n \left(\frac{1}{\lambda_i-\ell}\right)^q.\label{eq:defpot}
\end{align}

We will show how the potential function evolves after each iteration in \secref{approx}. Combing this with the ending condition of the algorithm, we will prove in \secref{aptanalysis} that the algorithm outputs a linear-sized spectral sparsifier. We will prove \thmref{graph} and \thmref{general} in \secref{mainproof}.

\subsection{Analysis of a Single Iteration\label{sec:approx}}

We analyze the sampling scheme within a single iteration, and drop the subscript representing the iteration $j$ for simplicity.  Recall that in each iteration the algorithm  samples $N$ vectors  independently from $\mathcal{V}=\{v_i\}_{i=1}^{m}$ satisfying $\sum_{i=1}^m v_iv_i^{\rot}=I$, where every vector $v_i$ is sampled with probability $\frac{R_i(A,u,\ell)}{\sum_{j=1}^{m} R_j(A,u,\ell)}$. We use $v_1,\cdots, v_{N}$ to denote these $N$ sampled vectors, and define the reweighted vectors by
\[
w_i\triangleq \sqrt{\frac{\eps}{q\cdot R_i(A,u,\ell)}}\cdot v_i,
\]
for any $1\leq i\leq N$. Let
\[
W\triangleq\sum_{i=1}^{N} w_iw_i^{\rot},
\]
and
we use $W\sim \mathcal{D}(A,u,\ell)$
to represent that $W$ is sampled in this way with parameters $A,u$ and $\ell$.
We will show that with high probability matrix $W$ satisfies $\mathbf{0}\preceq W\preceq \frac{1}{2}(uI-A)$.
We first recall the following Matrix Chernoff Bound.

\begin{lem}[Matrix Chernoff Bound, \cite{Tropp12}]\label{lem:matrixchernoff}
Let $\{\mat{X}_k\}$ be a finite sequence of independent, random, and self-adjoint matrices with dimension $n$. Assume that each random matrix satisfies
$
\mat{X}_k\succeq \mathbf{0}$, and $\lambda_{\max}(\mat{X}_k)\leq D$.
 Let
 $\mu\geq\lambda_{\max}\left(  \sum_k \Ex{\mat{X}_k} \right)$.
 Then, it holds for any $\delta\geq 0$ that
 \[
 \Pro{\lambda_{\max}\left( \sum_k \mat{X}_k \right) \geq (1+\delta)\mu} \leq n\cdot \left( \frac{\ce^{\delta}}{ (1+\delta)^{1+\delta}} \right)^{\mu/D}.
 \]
\end{lem}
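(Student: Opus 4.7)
The plan is to follow the matrix Laplace transform method of Ahlswede--Winter and Tropp, which lifts the scalar Chernoff proof to the self-adjoint matrix setting. Writing $Y = \sum_k X_k$, I would start from the operator version of Markov's inequality: for any $\theta > 0$,
\[
\Pro{\lambda_{\max}(Y)\ge t}\le \ce^{-\theta t}\,\Ex{\tr\exp(\theta Y)},
\]
using that $\lambda_{\max}(Y)\ge t$ iff $\lambda_{\max}(\theta Y)\ge \theta t$ iff $\ce^{\lambda_{\max}(\theta Y)}\ge \ce^{\theta t}$, together with $\ce^{\lambda_{\max}(\theta Y)}\le \tr\exp(\theta Y)$ since $\exp(\theta Y)\succeq \mathbf{0}$.

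The main obstacle, which is exactly the matrix-specific difficulty, is controlling $\Ex{\tr\exp(\theta\sum_k X_k)}$ when the $X_k$ do not commute. In the scalar case one uses multiplicativity of the exponential; here one invokes Lieb's concavity theorem, which yields the master inequality
\[
\Ex{\tr\exp\bigl(\tsum_k \theta X_k\bigr)}\le \tr\exp\bigl(\tsum_k \log\Ex{\ce^{\theta X_k}}\bigr).
\]
This is the heart of the argument; once it is in place, the problem decouples.

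Next I would bound each matrix cumulant generating function $\log\Ex{\ce^{\theta X_k}}$ using the hypothesis $\mathbf{0}\preceq X_k\preceq D\cdot I$. On the interval $[0,D]$, convexity of $x\mapsto \ce^{\theta x}$ gives $\ce^{\theta x}\le 1+\frac{\ce^{\theta D}-1}{D}x$, which transfers to the operator inequality $\ce^{\theta X_k}\preceq I+\frac{\ce^{\theta D}-1}{D}X_k$. Taking expectations and applying operator monotonicity of $\log$ together with $\log(I+M)\preceq M$ yields
\[
\log\Ex{\ce^{\theta X_k}}\preceq \frac{\ce^{\theta D}-1}{D}\,\Ex{X_k}.
\]
Summing over $k$, invoking the hypothesis $\lambda_{\max}(\sum_k \Ex{X_k})\le \mu$, and using monotonicity of $\tr\exp$ on the Loewner order gives
\[
\tr\exp\bigl(\tsum_k \log\Ex{\ce^{\theta X_k}}\bigr)\le n\cdot \exp\!\left(\frac{\ce^{\theta D}-1}{D}\cdot \mu\right).
\]

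Combining this with the Markov step and setting $t=(1+\delta)\mu$ gives
\[
\Pro{\lambda_{\max}(Y)\ge (1+\delta)\mu}\le n\cdot \exp\!\left(\frac{\ce^{\theta D}-1}{D}\mu-\theta(1+\delta)\mu\right).
\]
The final step is a routine scalar optimization: choosing $\theta=\frac{1}{D}\log(1+\delta)$ minimizes the exponent and rearranges to the claimed bound $n\bigl(\ce^\delta/(1+\delta)^{1+\delta}\bigr)^{\mu/D}$. The only genuinely nontrivial ingredient is Lieb's concavity theorem; the remaining steps are operator-monotonicity facts and the scalar Chernoff optimization.
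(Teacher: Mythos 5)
Your outline is a faithful reconstruction of Tropp's proof of the matrix Chernoff bound, which is precisely the reference the paper cites for this lemma (the paper itself gives no proof). The key steps you identify — the trace Markov step, Lieb's concavity theorem to decouple the noncommuting summands, the convexity/chord bound $\ce^{\theta x}\le 1+\frac{\ce^{\theta D}-1}{D}x$ on $[0,D]$ together with operator monotonicity of $\log$, and the scalar optimization $\theta=\frac{1}{D}\log(1+\delta)$ — are exactly the ingredients of the cited argument, and the algebra in the final exponent is correct.
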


\begin{lem}\label{lem:concentration}
Assume that the number of samples satisfies
 \[
N< \frac{2}{n^{2/q}}
\left( \sum_{i=1}^{m} R_i(A,u,\ell)\right)
\cdot  \lambda_{\min}(u\mat{I}-\mat{A}).
\]
Then, it holds that
\[
\Ex{W}=
\frac{\eps}{q}\cdot\frac{N}{\sum_{i=1}^{m} R_i(A,u,\ell)} \cdot\mat{I},
\]
and
\[
\Pro{ \mathbf{0}\preceq W\preceq \frac{1}{2} \cdot (u\mat{I}-\mat{A}) }
 \geq 1- \frac{\eps^2}{100 q n}.
\]
\end{lem}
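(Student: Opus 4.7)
The plan is to handle the two conclusions separately: the formula for $\Ex{W}$ follows from a direct calculation, while the operator-norm bound reduces to a single application of \lemref{matrixchernoff} after conjugating by $(uI-A)^{-1/2}$.

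For the expectation, each random summand $w_i w_i^{\rot}$ is obtained by sampling an index $j$ with probability $R_j/\sum_k R_k$ and then rescaling $v_j$ by $\sqrt{\eps/(q R_j)}$, so
\[
\Ex{w_i w_i^{\rot}} = \sum_{j=1}^{m} \frac{R_j}{\sum_k R_k}\cdot \frac{\eps}{q R_j}\, v_j v_j^{\rot} = \frac{\eps}{q \sum_k R_k}\, \sum_{j=1}^{m} v_j v_j^{\rot} = \frac{\eps}{q \sum_k R_k}\, I,
\]
using the normalization $\sum_j v_j v_j^{\rot} = I$. Summing $N$ independent copies yields the claimed formula for $\Ex{W}$, and $W \succeq \mathbf{0}$ is immediate because $W$ is a sum of rank-one PSD matrices.

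For the upper bound, I would reduce $W \preceq \tfrac{1}{2}(uI-A)$ to $\lambda_{\max}(Y) \leq \tfrac{1}{2}$, where $Y := \sum_{i=1}^{N} Y_i$ and $Y_i := (uI-A)^{-1/2}\, w_i w_i^{\rot}\, (uI-A)^{-1/2}$. The elementary identity $v v^{\rot} \preceq (v^{\rot} B^{-1} v)\, B$ combined with $v_i^{\rot}(uI-A)^{-1} v_i \leq R_i(A,u,\ell)$ gives $v_i v_i^{\rot} \preceq R_i\,(uI-A)$, hence $w_i w_i^{\rot} \preceq (\eps/q)(uI-A)$ and $Y_i \preceq (\eps/q)\, I$ almost surely, so one may take $D = \eps/q$ in \lemref{matrixchernoff}. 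The hypothesis on $N$ then translates into
\[
\mu := \lambda_{\max}(\Ex{Y}) = \frac{\eps N}{q \sum_k R_k \cdot \lambda_{\min}(uI-A)} < \frac{2\eps}{q\, n^{2/q}}.
\]

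Finally, I would apply \lemref{matrixchernoff} at threshold $t = 1/2$, so $1+\delta = t/\mu > q n^{2/q}/(4\eps)$, which is enormous under $\eps \leq 1/120$ and $q \geq 10$. Using the standard simplification $(e^\delta/(1+\delta)^{1+\delta})^{\mu/D} \leq (e\mu/t)^{t/D}$, the Chernoff bound collapses to
\[
\Pro{\lambda_{\max}(Y) \geq \tfrac{1}{2}} \leq n\left(\frac{4 e \eps}{q\, n^{2/q}}\right)^{q/(2\eps)}.
\]
Since $4 e \eps/q \leq 1/100$, $q/(2\eps) \geq 600$, and $(n^{-2/q})^{q/(2\eps)} = n^{-1/\eps} \leq n^{-120}$ in our parameter regime, this sits far below the target $\eps^2/(100 q n)$. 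The main obstacle, and the point that demands the most care, is precisely this final bookkeeping: the cut-off prescribed on $N$ is exactly what makes $\mu/t$ small enough that, after being raised to the power $t/D = q/(2\eps)$, the leading factor of $n$ from \lemref{matrixchernoff} is comfortably absorbed.
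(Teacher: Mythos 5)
Your proof is correct and follows essentially the same route as the paper: the same expectation computation, the same conjugation by $(uI-A)^{-1/2}$ to reduce to $\lambda_{\max}\leq 1/2$, the same choice $D=\eps/q$ and $1+\delta = 1/(2\mu)$ in the matrix Chernoff bound, and the same final simplification of the Chernoff tail. The only cosmetic difference is that you certify $D=\eps/q$ via the PSD domination $vv^{\rot}\preceq (v^{\rot}B^{-1}v)B$ where the paper uses the trace of the rank-one matrix; these are interchangeable.
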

\begin{proof}
By the description of the sampling procedure, it holds that
\[
\Ex{w_iw_i^{\rot}} = \sum_{j=1}^{m}  \frac{R_j(A,u,\ell)}{ \sum_{t=1}^{m} R_t(A,u,\ell)}  \cdot\frac{\eps}{q}\cdot  \frac{\vec{v}_j\vec{v}_j^{\rot}}{R_j(A,u,\ell)} =  \frac{\eps}{q} \cdot\frac{1}{\sum_{t=1}^{m} R_t(A,u,\ell)}\cdot\mat{I},
\]
and
\[
\Ex{W}=\Ex{\sum_{i=1}^{N} w_iw_i^{\rot}} = \frac{\eps}{q}\cdot\frac{N}{\sum_{i=1}^{m} R_i(A,u,\ell)} \cdot\mat{I},
\]
which proves the first statement.

Now for the second statement. Let $$\vec{z}_i = (u\mat{I}-\mat{A})^{-1/2} w_i.$$ It holds that
\begin{align*}
\tr\left(\vec{z}_i\vec{z}_i^{\rot} \right) & = \tr\left( (u\mat{I}-\mat{A})^{-1/2}w_iw_i^{\rot}(u\mat{I} -\mat{A})^{-1/2} \right) \\
& = \frac{\varepsilon}{q}\cdot \frac{ \tr\left( (u\mat{I}-\mat{A})^{-1/2} \vec{v}_i \vec{v}_i^{\rot} (u\mat{I}-\mat{A})^{-1/2} \right) }{R_i(A,u,\ell)} \\
& \leq \frac{\varepsilon}{q} \cdot\frac{\vec{v}_i^{\rot} (u\mat{I}-\mat{A})^{-1}\vec{v}_i }{\vec{v}_i^{\rot} (u\mat{I}-\mat{A})^{-1}\vec{v}_i+\vec{v}_i^{\rot} (\mat{A}-\ell\mat{I})^{-1}\vec{v}_i} \\
& \leq \frac{\varepsilon}{q},
\end{align*}
and $\lambda_{\max}(\vec{z}_i\vec{z}_i^{\rot})\leq \frac{\eps}{q}$.
Moreover, it holds that
\begin{align}
\Ex{\sum_{i=1}^{N}\vec{z}_i\vec{z}_i^{\rot} } &= \frac{\eps}{q}\cdot \frac{N}{\sum_{t=1}^{m} R_{t}(A,u,\ell)} \cdot (u\mat{I}-\mat{A})^{-1} \nonumber\\
&\preceq  \frac{\eps}{q} \cdot \frac{N}{\sum_{t=1}^{m} R_t(A,u,\ell)}\cdot\lambda_{\max}\left(  \frac{1}{u\mat{I}-\mat{A}}\right)\cdot\mat{I}.\label{eq:expz}
\end{align}
This implies that
\[
\lambda_{\max}\left(  \Ex{ \sum_{i=1}^{N} \vec{z}_i\vec{z}_i^{\rot} } \right) \leq \frac{\eps}{q} \cdot\frac{N}{\sum_{t=1}^{m} R_t(A,u,\ell)}\cdot\lambda_{\max}\left(  \frac{1}{u\mat{I}-\mat{A}}\right).
\]
By setting
\[
\mu=\frac{\eps}{q}\cdot\frac{N}{\sum_{i=1}^{m} R_i(A,u,\ell)}\cdot \lambda_{\max}\left(\frac{1}{u\mat{I}-\mat{A}}\right) ,
\]
it holds by the Matrix Chernoff Bound~(cf. \lemref{matrixchernoff}) that
\[
\Pro{\lambda_{\max}\left( \sum_{i=1}^{N} \vec{z}_i\vec{z}_i^{\rot} \right) \geq (1+\delta)\mu } \leq n\cdot \left( \frac{\ce^{\delta}}{ (1+\delta)^{1+\delta}} \right)^{\mu\cdot q/\eps}.
\]
Set the value of $1+\delta$ to be
\begin{align*}
1+\delta & = \frac{1}{2 \mu} = \frac{q}{2\eps  N}\cdot \left( \sum_{j=1}^{m} R_j(A,u,\ell) \right)\cdot \frac{1}{ \lambda_{\max}\left(\frac{1}{u\mat{I}-\mat{A}} \right) } \\
& = \frac{q}{2 \eps  N}\cdot \left( \sum_{j=1}^{m} R_j(A,u,\ell) \right)\cdot \lambda_{\min}(u\mat{I}-\mat{A}) \\
& \geq \frac{q}{4\eps}\cdot n^{2/q},
\end{align*}
where the last inequality follows from the condition on $N$.
Hence, with probability at least
\[
1-n\cdot \left( \frac{\ce^{\delta}}{ (1+\delta)^{1+\delta}} \right)^{\mu\cdot q/\eps} \geq1- n\cdot \left( \frac{\ce}{1+\delta} \right)^{(1+\delta)\cdot \mu\cdot q/\eps} \geq1-
n\ \left( \frac{\ce}{1+\delta}\right)^{\frac{q}{2\eps}} \geq1- \frac{\eps^2}{100 q n},
\]
we have that
\[
\lambda_{\max}\left( \sum_{i=1}^{N} z_iz_i^{\rot}\right) \leq (1+\delta)\cdot \mu =\frac{1}{2},
\]
which implies that
$ \mathbf{0}\preceq \sum_{i=1}^{N} z_iz_i^{\rot}\preceq \frac{1}{2}\cdot I$ and
$\mathbf{0}\preceq W\preceq \frac{1}{2}\cdot (uI-A)$.
\end{proof}

Now we analyze the change of the potential function after each iteration, and  show that the expected value of the potential function decreases over time.  By \lemref{concentration},  with probability at least $1- \frac{\eps^2}{100 q n}$, it holds that
\[
 \mathbf{0}\preceq W\preceq \frac{1}{2} (u\mat{I}-\mat{A}).
\]
We
define
\begin{align*}
\widetilde{\mathbb{E}}\left[f(W)\right] \triangleq \sum_{W\sim\mathcal{D}(A,u,\ell)} \Pro{W \text{ is chosen and } W\preceq \frac{1}{2} (uI-A)}\cdot f\left(W\right).
\end{align*}

\lemref{uplem1} below shows how the potential function changes after each iteration, and  plays a key role in our analysis.
This lemma was first proved in \cite{BSS12} for the case of $q=1$, and was extended in \cite{zhu15}  to general values of $q$.  For completeness, we include the proof of the lemma in the appendix.

\begin{lem}[\cite{zhu15}]\label{lem:uplem1}
Let $q\geq10$ and $\varepsilon\leq1/10$. Suppose that $w^{\rot} (uI-A)^{-1} w \leq \frac{\eps}{q}$ and  $w^{\rot} (A-\ell I)^{-1} w \leq \frac{\eps}{q}$.
It holds that
\[
\tr(A+ w w^{\rot} -\ell I)^{-q} \le \tr(A-\ell I)^{-q} - q(1-\varepsilon) \ w^\rot  (A-\ell I )^{-(q+1)} w,
\]
and
\[
\tr(uI-A-w w^\rot )^{-q} \le \tr(uI-A)^{-q} + q(1+\varepsilon)\ w^\rot  (uI-A )^{-(q+1)} w.
\]
\end{lem}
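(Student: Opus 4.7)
The plan is to treat the two inequalities symmetrically by an interpolation argument. For the first inequality, set $M = A - \ell I \succ 0$ and $\alpha = w^\rot M^{-1} w \leq \varepsilon/q$, and define
\[
f(t) = \tr(M + tww^\rot)^{-q}, \qquad t \in [0,1].
\]
By cyclicity of the trace, the derivative has the clean form $f'(t) = -q\, w^\rot (M + tww^\rot)^{-(q+1)} w$, so integrating from $0$ to $1$ gives
\[
\tr(M + ww^\rot)^{-q} = \tr M^{-q} - q\int_0^1 w^\rot(M + tww^\rot)^{-(q+1)} w\, dt,
\]
and it suffices to lower bound the integral by $(1-\varepsilon)\, w^\rot M^{-(q+1)} w$.

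The key identity is the Sherman--Morrison formula applied to the shifted resolvent $(M - zI + tww^\rot)^{-1}$: with $g(z) = w^\rot (M-zI)^{-1} w$, one obtains the scalar expression
\[
w^\rot (M - zI + tww^\rot)^{-1} w \;=\; \frac{g(z)}{1 + t\, g(z)}.
\]
Integrating in $t$ gives $\int_0^1 \frac{g(z)}{1+tg(z)}\,dt = \log(1+g(z))$, and matching coefficients of $z^q$ in the power series expansions on both sides produces the closed form
\[
\int_0^1 w^\rot(M + tww^\rot)^{-(q+1)} w\, dt \;=\; [z^q]\log\bigl(1 + g(z)\bigr).
\]
Writing $g(z) = \sum_{k\geq 0} z^k \gamma_{k+1}$ with $\gamma_k = w^\rot M^{-k} w$ and $\gamma_1 = \alpha$, the $z^q$ coefficient of $\log(1+g(z))$ has dominant term $\gamma_{q+1}/(1+\alpha)$ (coming from the $n=1$ part of the series $\log(1+h) = \sum_n (-1)^{n-1}h^n/n$ after renormalizing around $g(0) = \alpha$), plus correction terms involving products $\prod_{s=1}^n \gamma_{i_s+1}$ over compositions $i_1 + \ldots + i_n = q$ with each $i_s \geq 1$. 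The H\"older moment interpolation $\gamma_k \leq \gamma_1^{(q+1-k)/q}\gamma_{q+1}^{(k-1)/q}$ (log-convexity of $k \mapsto \log \gamma_k$) bounds each such $n$-fold product by $\alpha^{n-1}\gamma_{q+1}$. The closed form $\sum_{n \geq 2} \binom{q-1}{n-1}\alpha^{n-1}/n = \frac{(1+\alpha)^q - 1}{q\alpha} - 1$ together with the Taylor estimate $(1+\alpha)^q \leq 1 + q\alpha + \tfrac{q(q+1)}{2}\alpha^2(1+\alpha)^{q-2}$ controls the total correction by $O(\varepsilon)\gamma_{q+1}$, yielding $[z^q]\log(1+g(z)) \geq (1-\varepsilon)\gamma_{q+1}$ using $\alpha \leq \varepsilon/q$ and $q \geq 10$.

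The second inequality is obtained by the parallel interpolation with $N = uI - A$, $\beta = w^\rot N^{-1} w \leq \varepsilon/q$, and $h(t) = \tr(N - tww^\rot)^{-q}$. A first step is to confirm positive definiteness along the path: the Cauchy--Schwarz bound $ww^\rot \preceq \beta N$ yields $N - tww^\rot \succeq (1-t\beta) N \succ 0$, since $t\beta \leq \varepsilon/q \leq 1/100 < 1$. The same Sherman--Morrison identity (applied with the opposite sign) combined with the analogous Taylor estimate on $(1-\beta)^{-q}$ then produces the matching upper bound with factor $q(1+\varepsilon)$. The main technical obstacle throughout is that operator monotonicity of $X \mapsto X^{-r}$ fails for $r > 1$ on non-commuting PSD matrices, so the operator inequality $M + tww^\rot \preceq (1+t\alpha) M$ cannot be directly upgraded to a $(q+1)$-power comparison; the Sherman--Morrison identity sidesteps this by reducing the entire analysis to the single scalar function $g(z)$, after which only elementary Taylor estimates on $(1\pm\alpha)^{-q}$ remain.
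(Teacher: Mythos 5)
Your argument is correct, and it takes a genuinely different route from the paper's. The paper applies Sherman--Morrison once to $(Y+ww^\rot)^{-1}$, pushes the $q$-th power inside via the Lieb--Thirring inequality, and then uses a matrix binomial bound $(I\pm D)^q \preceq I \pm q(1\mp\varepsilon)D$ for the rank-one perturbation $D \preceq \tfrac{\varepsilon}{q}I$. You instead write the trace difference as $-q\int_0^1 w^\rot(M+tww^\rot)^{-(q+1)}w\,dt$, apply Sherman--Morrison to the shifted resolvent to collapse the integrand into the scalar generating function $\tfrac{g(z)}{1+tg(z)}$, integrate to $\log(1+g(z))$, and extract the $z^q$ coefficient; the error analysis then rests on log-convexity of the moments $\gamma_k = w^\rot M^{-k}w$ (Hölder on the spectral measure) together with the composition count $\binom{q-1}{n-1}$ and the closed form $\tfrac{(1+\alpha)^q-1}{q\alpha}$. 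I checked the key identities: the derivative formula via cyclicity, the scalar Sherman--Morrison reduction, the coefficient identity $\tfrac{1}{n}\binom{q-1}{n-1}=\tfrac{1}{q}\binom{q}{n}$, and the moment interpolation $\prod_s\gamma_{i_s+1}\leq\alpha^{n-1}\gamma_{q+1}$ for compositions of $q$; all are sound, and the numerics close with room to spare ($\alpha \leq \varepsilon/q$ gives a relative loss of roughly $0.6\varepsilon$). The second inequality mirrors this with $-\log(1-g)$, whose coefficients are all positive, and the relevant sum $\tfrac{(1-\beta)^{-q}-1}{q\beta} \leq 1+\varepsilon$ also checks out. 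What each approach buys: yours avoids Lieb--Thirring entirely and is in that sense more elementary, reducing everything to scalar power series after the resolvent trick, and it correctly identifies the obstruction (failure of operator monotonicity of $X\mapsto X^{-r}$ for $r>1$) that both proofs must circumvent; the paper's route is shorter and avoids the combinatorial bookkeeping over compositions. The one place you should add detail in a full write-up is the justification for exchanging the $t$-integral with the $z^q$-coefficient extraction (uniform convergence of the Neumann series on $|z|\leq\lambda_{\min}(M)/2$ suffices) and the convergence of the correction series in the second inequality, but neither is a gap in substance.
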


\begin{lem}\label{lem:potendecrease}
Let $j$ be any iteration. It holds that
\[
\TEx{\Phi_{u_{j+1}, \ell_{j+1}}(A_{j+1}) } \leq  \Phi_{u_j, \ell_j}(A_j).
\]
\end{lem}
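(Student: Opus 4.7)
The plan is to argue separately for the $u$-part and $\ell$-part of the potential, showing that under the good event $\mathcal{E} = \{\mathbf{0}\preceq W\preceq \tfrac12(uI-A)\}$ from \lemref{concentration} the expected decrease from adding $W$ is strong enough to absorb the expected increase caused by the shifts $\Delta_u$ and $\Delta_\ell$. Since $\TEx{\cdot}$ is by definition the unconditional expectation truncated to $\mathcal{E}$, I may freely assume $\mathcal{E}$ holds during the bookkeeping.

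I would drop the subscript $j$ and write $W=\sum_{k=1}^{N} w_k w_k^{\rot}$ with $w_k=\sqrt{\eps/(q R_{i_k})}\,v_{i_k}$. The sampling rule immediately gives $w_k^{\rot}(uI-A)^{-1}w_k\le \eps/q$ and $w_k^{\rot}(A-\ell I)^{-1}w_k\le \eps/q$. For the $\ell$-side these bounds propagate to every partial sum $A^{(k-1)}:=A+\sum_{k'<k}w_{k'}w_{k'}^{\rot}$ because $A^{(k-1)}\succeq A$; for the $u$-side the event $\mathcal{E}$ yields $uI-A^{(k-1)}\succeq \tfrac12(uI-A)$, so $w_k^{\rot}(uI-A^{(k-1)})^{-1}w_k\le 2\eps/q\le 1/10$, which still satisfies the hypothesis of \lemref{uplem1} (with $\eps$ replaced by $2\eps$). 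Telescoping \lemref{uplem1} along $A=A^{(0)}\to A^{(1)}\to\cdots\to A^{(N)}=A+W$ then gives, under $\mathcal{E}$,
\begin{align*}
\tr(A+W-\ell I)^{-q}&\le \tr(A-\ell I)^{-q}-q(1-\eps)\cdot W\bullet (A-\ell I)^{-(q+1)}+E_\ell,\\
\tr(uI-A-W)^{-q}&\le \tr(uI-A)^{-q}+q(1+2\eps)\cdot W\bullet (uI-A)^{-(q+1)}+E_u,
\end{align*}
where $E_\ell,E_u$ are error terms coming from replacing $A^{(k-1)}$ by $A$ inside the $(q+1)$-th power; I would argue they are each absorbable into an extra multiplicative $O(\eps)$ slack using $\tfrac12(uI-A)\preceq uI-A^{(k-1)}\preceq uI-A$ together with a simple scalar estimate applied eigenvalue-wise.

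Next I would handle the two shifts by the exact formula
\[
\tr\bigl((u+\Delta_u)I-A-W\bigr)^{-q}-\tr(uI-A-W)^{-q}=-q\!\int_0^{\Delta_u}\!\tr(sI+uI-A-W)^{-(q+1)}\,ds,
\]
and analogously with the opposite sign for $\Delta_\ell$. The bound $\Delta_u,\Delta_\ell\le \tfrac{4\eps}{q}\lambda_{\min}(uI-A)$, which follows from the definition of $N_j$, lets me replace the integrand by $\tr(uI-A)^{-(q+1)}$ up to a $(1+O(\eps))$ factor, turning the shift terms into $-q\Delta_u(1-O(\eps))\tr(uI-A)^{-(q+1)}$ and $+q\Delta_\ell(1+O(\eps))\tr(A-\ell I)^{-(q+1)}$.

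Finally I would take truncated expectations and combine the four pieces. Using $\TEx{W}\preceq \Ex{W}=\tfrac{\eps N}{q\sum_i R_i}\,I$ and plugging the definitions $\Delta_u=(1+2\eps)\tfrac{\eps N}{q\sum_i R_i}$, $\Delta_\ell=(1-2\eps)\tfrac{\eps N}{q\sum_i R_i}$, the $u$-side contribution to $\TEx{\Phi_{u+\Delta_u,\ell+\Delta_\ell}(A+W)}-\Phi_{u,\ell}(A)$ is at most $\bigl[(1+2\eps)-(1+2\eps)(1-O(\eps))\bigr]\tfrac{\eps N}{\sum_i R_i}\tr(uI-A)^{-(q+1)}$ and analogously the $\ell$-side contributes at most $\bigl[(1-2\eps)-(1-\eps)\bigr]\cdot(\text{positive})$; for $\eps\le 1/120$ and $q\ge 10$ the built-in $\pm 2\eps$ safety margin in the shifts strictly dominates the $\pm\eps$ losses from \lemref{uplem1} and the $O(\eps)$ errors in $E_u,E_\ell$, giving $\le 0$ as required. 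The main obstacle is the second step: controlling how the partial sums drift inside the $(q+1)$-st power terms, because $M\mapsto M^{-(q+1)}$ is not operator-monotone on non-commuting PSD matrices; converting the spectral bound from $\mathcal{E}$ into a clean scalar loss inside the telescoping is the one place where the $2\eps$ margin, rather than a mere $\eps$, is essential.
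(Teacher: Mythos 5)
Your overall plan is sound and shares the paper's core components — the good event $\mathcal{E}$ from \lemref{concentration}, the application of \lemref{uplem1} with a factor $2\eps$ in place of $\eps$, and the observation that the $\pm 2\eps$ margin in $\Delta_{u,j},\Delta_{\ell,j}$ is the budget that must dominate the losses from \lemref{uplem1}. But the route you take — telescope the rank-1 updates against the \emph{fixed} barriers $(u,\ell)$ first, and then apply the barrier shifts to the final matrix $A+W$ — introduces a genuine gap exactly at the point you flag, and the $2\eps$ margin does not close it.

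Concretely, telescoping \lemref{uplem1} along $A=B_0\to B_1\to\cdots\to B_N=A+W$ gives an error term of the form $\sum_k w_k^\rot(uI-B_{k-1})^{-(q+1)}w_k$ rather than $W\bullet(uI-A)^{-(q+1)}$. Under $\mathcal{E}$ you only know $\tfrac12(uI-A)\preceq uI-B_{k-1}\preceq uI-A$. Because the map $M\mapsto M^{-(q+1)}$ is not operator-monotone, the quadratic form $w_k^\rot(uI-B_{k-1})^{-(q+1)}w_k$ can exceed $w_k^\rot(uI-A)^{-(q+1)}w_k$ by a factor as large as $2^{q+1}$ — for $q\geq 10$ this is already $\geq 2048$, which is nowhere near being absorbed by a relative margin of size $O(\eps)$ with $\eps\leq 1/120$. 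The same problem appears on the $\ell$-side, where you would need a \emph{lower} bound on $w_k^\rot(B_{k-1}-\ell I)^{-(q+1)}w_k$ against $w_k^\rot(A-\ell I)^{-(q+1)}w_k$; here you do not even have the two-sided multiplicative control from $\mathcal{E}$, since $\mathcal{E}$ bounds $W$ only against $\tfrac12(uI-A)$, not against $\tfrac12(A-\ell I)$. Your error terms $E_u,E_\ell$ are therefore not controllable at the precision your accounting requires, and the final bookkeeping breaks.

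The paper sidesteps the issue entirely by \emph{interleaving} the barrier shifts with the rank-1 updates: it introduces per-step shifts $\overline{\Delta}_u=\Delta_{u,j}/N_j$, $\overline{\Delta}_\ell=\Delta_{\ell,j}/N_j$ and compares $\Phi_{\hat u_i,\hat\ell_i}(B_i)$ to $\Phi_{\hat u_{i-1},\hat\ell_{i-1}}(B_{i-1})$ one step at a time. In each step the decrease from adding $w_iw_i^\rot$ (via \lemref{uplem1}) and the increase from shifting the barriers (via the convexity bound $f_i(1)-f_i(0)\le f_i'(1)$) are both expressed in terms of the \emph{same} matrix $B_{i-1}$ and the \emph{same} barriers $\hat u_i,\hat\ell_i$, so they cancel termwise — no drift comparison between $(uI-A)^{-(q+1)}$ and $(uI-B_{k-1})^{-(q+1)}$ ever arises. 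Verifying the hypotheses of \lemref{uplem1} at the shifted barriers $\hat u_i,\hat\ell_i$ is exactly where the $\Delta_{\ell,j}\le\tfrac12\lambda_{\min}(A_j-\ell_j I)$ and $W_j\preceq\tfrac12(u_jI-A_j)$ facts get used. If you want to salvage a two-stage ``telescope then shift'' argument, you would need a much tighter per-step control of the drift than the $\tfrac12$-bound that $\mathcal{E}$ gives; as written, that part of your proof does not go through.
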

\begin{proof}
Let $w_1 w_1^{\rot}, \cdots, w_{N_{j}}w_{N_{j}}^{\rot}$ be the matrices picked in iteration $j$, and define for any $0\leq i\leq N_{j}$ that
\[
B_i= A_j+ \sum_{t=1}^i w_t w_t^{\rot}.
\]
We study the change of the potential function after adding a rank-1 matrix within each iteration. For this reason, we use
\[
\overline{\Delta}_{u} =\frac{\Delta_{u,j}}{N_{j}} = (1+2\eps)\cdot \frac{\eps}{q\cdot\sum_{t=1}^m R_t(A_{j},u_{j},\ell_{j})},
\]
and
\[
\overline{\Delta}_{\ell} =\frac{\Delta_{\ell,j}}{N_{j}} = (1-2\eps)\cdot \frac{\eps}{q\cdot\sum_{t=1}^m R_t(A_{j},u_{j},\ell_{j})}
\]
to express the average change of the barrier values $\Delta_{u,j}$ and $\Delta_{\ell,j}$.
We further define for $0\leq j\leq N_j$ that
\[
\hat{u}_i=u_j+i\cdot \overline{\Delta}_{u}, \qquad \hat{\ell}_i=\ell_j+i\cdot \overline{\Delta}_{\ell}.
\]

Assuming $\mathbf{0}\preceq W_j\preceq \frac{1}{2} (u_{j}\mat{I}-\mat{A_j})$, we claim   that
\begin{equation}\label{eq:claim_potendecrease}
w_iw_i^{\rot}\preceq\frac{2\eps}{q}\cdot\left(\hat{u}_{i}I-B_{i-1}\right) \mathrm{~ and~ } w_iw_i^{\rot}\preceq\frac{2\eps}{q}\cdot\left(B_{i-1}-\hat{\ell}_{i} I\right),
\end{equation}
for any $1\leq i \leq N_j$.
Based on this, we  apply \lemref{uplem1} and get that
\begin{align}
\TEx{\Phi_{\hat{u}_{i},\hat{\ell}_{i}}\left(B_{i-1}+w_i w_i^\rot\right)} & \leq \Phi_{\hat{u}_{i},\hat{\ell}_{i}}(B_{i-1}) + q(1+2\varepsilon) \tr\left( (\hat{u}_{i}I-B_{i-1})^{-(q+1)} \Ex{w_i w_i^\rot} \right) \nonumber\\
& \qquad - q(1-2\varepsilon)\tr\left(  \left(B_{i-1}-\hat{\ell}_{i} I\right)^{-(q+1)}\Ex{w_i w_i^\rot} \right)\nonumber\\
& =  \Phi_{\hat{u}_{i},\hat{\ell}_{i}}(B_{i-1}) + q\cdot \overline{\Delta}_{u} \cdot\tr\left( (\hat{u}_{i}I-B_{i-1})^{-(q+1)} \right)\nonumber\\
& \qquad  - q\cdot\overline{\Delta}_{\ell} \cdot \tr\left(  (B_{i-1}-\hat{\ell}_{i} I )^{-(q+1)}\right).\label{eq:update1}
\end{align}
We define a function $f_i$ by
\[
f_i(t)=\tr\left( \left(\hat{u}_{i-1}+t\cdot \overline{\Delta}_{u} \right)I - B_{i-1}\right)^{-q} + \tr\left( B_{i-1}- \left(\hat{\ell}_{i-1}+t\cdot \overline{\Delta}_{\ell}\right)I \right)^{-q}.
\]
Notice that
\begin{align*}
\frac{\mathrm{d}f_{i}(t)}{\mathrm{d}t} & = -q\cdot \overline{\Delta}_{u} \cdot \tr\left( \left(\hat{u}_{i-1}+t\cdot \overline{\Delta}_{u}\right)I - B_{i-1}\right)^{-(q+1) }+ q\cdot \overline{\Delta}_{\ell}\cdot  \tr\left( B_{i-1}- \left(\hat{\ell}_{i-1}+t\cdot \overline{\Delta}_{\ell}\right)I \right)^{-(q+1)}.
\end{align*}
Since $f$ is convex, we have that
\begin{equation}\label{eq:der2}
\frac{\mathrm{d}f_{i}(t)}{\mathrm{d}t} \Big|_{t=1} \geq f_i(1)-f_i(0) =\Phi_{\hat{u}_{i}, \hat{\ell}_{i}}(B_{i-1}) - \Phi_{\hat{u}_{i-1}, \hat{\ell}_{i-1}}(B_{i-1}).
\end{equation}
Putting \eq{update1} and \eq{der2} together,
 we have that
\[
\TEx{\Phi_{\hat{u}_{i},\hat{\ell}_{i}}(B_{i})} \leq \Phi_{\hat{u}_{i},\hat{\ell}_{i}}(B_{i-1}) -\frac{\mathrm{d}f_{i}(t)}{\mathrm{d}t} \Big|_{t=1}\leq  \Phi_{\hat{u}_{i-1}, \hat{\ell}_{i-1}}( B_{i-1}).
\]
Repeat this argument, we have that
\[
\TEx{\Phi_{u_{j+1}, \ell_{j+1}}(A_{j+1})} = \TEx{\Phi_{\hat{u}_{N_j},\hat{\ell}_{N_j}}\left(B_{N_j}\right)} \leq \Phi_{\hat{u}_{0}, \hat{\ell}_{0}}(B_{0}) = \Phi_{u_{j}, \ell_{j}}(A_{j}),
\]
which proves the statement.

So, it suffices to prove the claim \eq{claim_potendecrease}. Since $v v^\rot \preceq (v^\rot B^{-1} v) B$ for any vector $v$ and \textsf{PSD} matrix $B$, we have that
$$ \frac{v_i v_i^\rot}{R_i(A_{j},u_{j},\ell_{j})} \preceq \frac{v_i v_i^\rot}{v_i^\rot (u_j I-A_j)^{-1} v_i} \preceq
u_j I - A_j.
$$
By the assumption of $W_{j}\preceq\frac{1}{2}(u_{j}I-A_{j})$, it holds that 
$$ w_i w_i^\rot  = \frac{\eps}{q R_i(A_{j},u_{j},\ell_{j})}v_{i}v_{i}^\rot
\preceq \frac{\varepsilon}{q}\left(u_{j}I-A_{j}\right)
\preceq \frac{2\varepsilon}{q}\left(\hat{u}_{i}I-B_{i-1}\right).$$
This proves the first  statement of the claim.

For the second statement, notice that
$$
\ell_{j+1}-\ell_{j}  =  \Delta_{\ell,j}
\leq  \frac{\varepsilon N_j }{q \sum_{t=1}^m R_t(A_{j},u_{j},\ell_{j})}
 \leq  \frac{1}{2}\cdot\lambda_{\min}(A_{j}-\ell_{j}I)
$$
and hence
$$w_{i}w_{i}^{\rot}
 \preceq  \frac{\varepsilon}{q}\left(A_{j}-\ell_{j}I\right)
 \preceq \frac{2\varepsilon}{q}\left(A_{j}-\hat{\ell}_{i}I\right)
 \preceq \frac{2\varepsilon}{q}\left(B_{i-1}-\hat{\ell}_{i}I\right).
$$
\end{proof}

\subsection{Analysis of the Approximation Guarantee\label{sec:aptanalysis}}

In this subsection we will prove that the algorithm produces a linear-sized $(1+O(\eps))$-spectral sparsifier. 
We assume that the algorithm finishes after $k$ iterations, and will prove that the output $A_k$ is  a $(1+O(\eps))$-spectral sparsifier. It  suffices to show that  the condition number of $A_k$
is small, which follows directly from our setting of parameters.

\begin{lem}\label{lem:condition}
The output matrix $A_k$ has condition number at most $1+O(\eps)$.
\end{lem}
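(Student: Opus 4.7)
The plan is to compute the final barrier values $u_k$ and $\ell_k$ explicitly, and then invoke the invariant $\ell_k I \prec A_k \prec u_k I$ that the algorithm maintains throughout; this immediately gives $\kappa(A_k)\leq u_k/\ell_k$, so it suffices to show this ratio is $1+O(\eps)$.

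First, I would introduce the ``undamped'' cumulative step
\[
S \triangleq \sum_{j=0}^{k-1}\frac{\eps\, N_j}{q\sum_i R_i(A_j,u_j,\ell_j)}.
\]
Reading off the updates of Algorithm~\ref{aaa},
\[
u_k = (2n)^{1/q} + (1+2\eps)\,S, \qquad \ell_k = -(2n)^{1/q} + (1-2\eps)\,S,
\]
and hence $u_k - \ell_k = 2(2n)^{1/q} + 4\eps S$. The termination condition $u_k - \ell_k \geq 4(2n)^{1/q}$ immediately yields the lower bound $S \geq (2n)^{1/q}/(2\eps)$.

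The matching upper bound on $S$ follows from controlling the overshoot in the final iteration. Using the definition of $N_{k-1}$ together with the trivial inequality $\min\{\lambda_{\min}(u_{k-1}I - A_{k-1}),\lambda_{\min}(A_{k-1}-\ell_{k-1}I)\} \leq u_{k-1}-\ell_{k-1} < 4(2n)^{1/q}$ (the last step being the while-guard at iteration $k-1$), the single-iteration increment to $u - \ell$ satisfies
\[
\Delta_{u,k-1} - \Delta_{\ell,k-1} \;=\; \frac{4\eps^2 N_{k-1}}{q \sum_i R_i(A_{k-1},u_{k-1},\ell_{k-1})} \;\leq\; \frac{4\eps^2(u_{k-1}-\ell_{k-1})}{q\,n^{2/q}} \;<\; \frac{16\eps^2 (2n)^{1/q}}{q\,n^{2/q}},
\]
which is negligible compared to $4(2n)^{1/q}$. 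Therefore $u_k - \ell_k = 4(2n)^{1/q}(1 + O(\eps^2/(qn^{2/q})))$ and $S = \frac{(2n)^{1/q}}{2\eps}(1 + O(\eps))$.

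Plugging back, $u_k = \frac{1+4\eps}{2\eps}(2n)^{1/q}(1+O(\eps))$ and $\ell_k \geq \frac{1-4\eps}{2\eps}(2n)^{1/q}(1-O(\eps))$. Since $\eps \leq 1/120$, the denominator stays bounded away from zero and $\ell_k > 0$, so $A_k \succ \mathbf{0}$ and
\[
\kappa(A_k) \;\leq\; \frac{u_k}{\ell_k} \;\leq\; \frac{1+4\eps}{1-4\eps}\,(1+O(\eps)) \;=\; 1 + O(\eps).
\]
The only genuine (and relatively mild) obstacle is not the arithmetic above but justifying the invariant $\ell_j I \prec A_j \prec u_j I$ at every iteration, on which the ratio computation depends: the upper half uses Lemma~\ref{lem:concentration} to obtain $W_j \preceq \tfrac{1}{2}(u_jI - A_j)$, whence $A_{j+1} \prec u_j I \preceq u_{j+1} I$; the lower half uses that the choice of $N_j$ forces $\Delta_{\ell,j} \leq \tfrac{\eps}{q\,n^{2/q}}\lambda_{\min}(A_j - \ell_j I) < \lambda_{\min}(A_j - \ell_j I)$, which keeps $\ell_{j+1} I$ strictly below $A_{j+1}$.
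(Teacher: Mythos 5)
Your proof is correct and follows essentially the same route as the paper's: both express $u_k$ and $\ell_k$ through the cumulative barrier increments, exploit the $(1+2\eps)$ versus $(1-2\eps)$ asymmetry in $\Delta_{u,j},\Delta_{\ell,j}$, and use the termination condition $u_k-\ell_k\geq 4\cdot(2n)^{1/q}$ to control the ratio. Your upper bound on $S$ via the final-iteration overshoot is superfluous, since $u_k/\ell_k=\bigl((2n)^{1/q}+(1+2\eps)S\bigr)/\bigl(-(2n)^{1/q}+(1-2\eps)S\bigr)$ is decreasing in $S$, so the lower bound $S\geq (2n)^{1/q}/(2\eps)$ alone already yields $u_k/\ell_k\leq (1+4\eps)/(1-4\eps)=1+O(\eps)$, which is in effect how the paper argues.
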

\begin{proof}
Since the condition number of $A_{k}$ is at most
$$\frac{u_{k}}{\ell_{k}}=\left(1-\frac{u_{k}-\ell_{k}}{u_{k}}\right)^{-1},$$
it suffices to prove that $(u_{k}-\ell_{k})/u_{k}=O(\eps)$.

Since the increase rate of $\Delta_{u,j}-\Delta_{\ell,j}$ with respect to $\Delta_{u,j}$ for any iteration $j$ is
\[
\frac{\Delta_{u,j}- \Delta_{\ell,j}}{\Delta_{u,j}} =\frac{(1+2\eps ) - (1-2\eps)}{1+2\eps} = \frac{4\eps}{1+2\eps}\leq 4\eps,
\]
we have that
\begin{align*}
\frac{u_{k}-\ell_{k}}{u_{k}}
& = \frac{2\cdot (2n)^{1/q}+\sum_{j=0}^{k-1}\left(\Delta_{u,j}-\Delta_{\ell,j}\right)}{(2n)^{1/q}+\sum_{j=0}^{k-1}\Delta_{u,j}}\\
&
\leq \frac{2\cdot (2n)^{1/q}+\sum_{j=0}^{k-1}\left(\Delta_{u,j}-\Delta_{\ell,j}\right)}{(2n)^{1/q}+(4\eps)^{-1}\sum_{j=0}^{k-1}\left(\Delta_{u,j}-\Delta_{\ell,j}\right)}.
\end{align*}
By the ending condition of the algorithm, it holds that $u_k-\ell_k\geq 4\cdot(2n)^{1/q}$, i.e.
\[
\sum_{j=0}^{k-1}\left( \Delta_{u,j} -\Delta_{\ell,j} \right) \geq 2\cdot (2n)^{1/q}.
\]
Hence, it holds that
$$\frac{u_{k}-\ell_{k}}{u_{k}}\leq\frac{2\cdot(2n)^{1/q}+2\cdot(2n)^{1/q}}{(2n)^{1/q}+\left(4\eps\right)^{-1}2\cdot(2n)^{1/q}}\leq8\eps,$$
which finishes the proof.
\end{proof}

Now we  prove that the algorithm
finishes in $O\left(\frac{qn^{3/q}}{\eps^2}\right)$ iterations, and
picks  $O\left(\frac{qn}{\eps^2}\right)$ vectors in total.

\begin{lem}\label{lem:iterationneeded}
The following statements hold:
\begin{itemize}
\item With probability at least $4/5$, the algorithm finishes in $\frac{10q n^{3/q}}{\eps^2}$ iterations.
\item With probability at least $4/5$, the algorithm chooses at most $\frac{10qn}{\eps^2}$ vectors.
\end{itemize}
\label{lem:bounditerations}
\end{lem}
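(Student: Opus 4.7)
The plan is to build a nonnegative supermartingale from the potential function and combine it with a union bound, so that both bounds follow from a single good event of probability at least $4/5$. I introduce $G_j$ to denote the good-sampling event of \lemref{concentration} that the matrix $W_j$ sampled in iteration $j$ satisfies $W_j\preceq \tfrac{1}{2}(u_j I - A_j)$, and consider the stopped potential
\[
M_j \triangleq \Phi_{u_j,\ell_j}(A_j)\cdot\prod_{i=0}^{j-1}\mathbf{1}_{G_i}.
\]
Reading \lemref{potendecrease} as $\mathbb{E}[\Phi_{u_{j+1},\ell_{j+1}}(A_{j+1})\mathbf{1}_{G_j}\mid\mathcal{F}_j]\leq \Phi_{u_j,\ell_j}(A_j)$ gives $\mathbb{E}[M_{j+1}\mid\mathcal{F}_j]\leq M_j$, so $M_j$ is a nonnegative supermartingale with $M_0=\Phi_{u_0,\ell_0}(\mathbf{0})=1$. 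Doob's maximal inequality then yields $\Pr[\max_{j\leq K} M_j\geq 10]\leq 1/10$, and for $K=10qn^{3/q}/\eps^2$ the union bound together with \lemref{concentration} gives $\Pr\!\left[\bigcup_{j<K} G_j^c\right]\leq K\cdot\eps^2/(100qn)\leq 1/10$. Hence the ``good event'' $\mathcal{E}$ on which every $G_j$ holds and $\Phi_{u_j,\ell_j}(A_j)\leq 10$ throughout has probability at least $4/5$.

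On $\mathcal{E}$, $\Phi_j\leq 10$ forces $\lambda_{\min}(u_j I - A_j)$ and $\lambda_{\min}(A_j-\ell_j I)$ to be at least $10^{-1/q}$, and hence
\[
\Delta_{u,j}-\Delta_{\ell,j} = \frac{4\eps^2}{q\,n^{2/q}}\cdot\min\left\{\lambda_{\min}(u_j I-A_j),\lambda_{\min}(A_j-\ell_j I)\right\} \geq \frac{4\eps^2}{q\,n^{2/q}\cdot 10^{1/q}}.
\]
If the algorithm had not terminated by iteration $K$, summing would give $u_K-\ell_K \geq 2(2n)^{1/q} + 40\,n^{1/q}/10^{1/q}$, which for $q\geq 10$ already exceeds the termination threshold $4(2n)^{1/q}$; so the while-loop must have exited before iteration $K$, proving the first bullet.

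For the second bullet I will use the identity
\[
N_j = \frac{q\sum_i R_i(A_j,u_j,\ell_j)\cdot(\Delta_{u,j}+\Delta_{\ell,j})}{2\eps},
\]
which follows from $\Delta_{u,j}+\Delta_{\ell,j}=2\eps N_j/(q\sum_i R_i)$. On $\mathcal{E}$ the power-mean inequality together with $\Phi_j\leq 10$ yields $\sum_i R_i(A_j,u_j,\ell_j)=\tr(u_jI-A_j)^{-1}+\tr(A_j-\ell_jI)^{-1}\leq 2n^{1-1/q}\cdot 10^{1/q}$. The relation $\Delta_{u,j}+\Delta_{\ell,j}=(\Delta_{u,j}-\Delta_{\ell,j})/(2\eps)$, combined with the observation that at termination $u_\tau-\ell_\tau$ overshoots $4(2n)^{1/q}$ by at most a single increment of size $O(\eps^2 n^{-1/q}/q)$, gives $\sum_j(\Delta_{u,j}+\Delta_{\ell,j})\leq 3(2n)^{1/q}/(2\eps)$. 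Multiplying the two estimates produces $\sum_j N_j\leq 3qn\cdot 2^{1/q}\cdot 10^{1/q}/(2\eps^2)\leq 10qn/\eps^2$ for $q\geq 10$.

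The main obstacle is the first paragraph: the raw sequence $\Phi_{u_j,\ell_j}(A_j)$ is \emph{not} itself a supermartingale, since \lemref{potendecrease} only controls the expectation restricted to the good sample, so Markov or Doob cannot be applied to $\Phi_j$ directly. The indicator-product construction above kills the trajectory on the first bad step, which simultaneously restores the supermartingale property and isolates the per-iteration sampling failures into a separate union bound. Everything else amounts to routine arithmetic bookkeeping to verify that the constants fit the $10qn^{3/q}/\eps^2$ and $10qn/\eps^2$ thresholds.
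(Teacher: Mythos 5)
Your proof is correct, but its probabilistic core is genuinely different from the paper's. The paper never establishes a uniform pathwise bound on $\Phi_{u_j,\ell_j}(A_j)$: it lower-bounds each barrier increment by $\frac{4\eps^2}{qn^{2/q}}\bigl(\Phi_{u_j,\ell_j}(A_j)\bigr)^{-1/q}$, converts the needed lower bound on $\sum_j \Phi_j^{-1/q}$ into an upper bound on $\sum_j \Phi_j^{1/q}$ via $\bigl(\sum_j a_j\bigr)\bigl(\sum_j a_j^{-1}\bigr)\geq k^2$, bounds $\TEx{\sum_j\Phi_j^{1/q}}\leq k$ using \lemref{potendecrease} together with Jensen, and finishes with Markov's inequality on that sum plus the same union bound over the events $W_j\not\preceq\frac12(u_jI-A_j)$; the second bullet is then dispatched with ``following the same proof as the first part.'' You instead kill the potential at the first bad sample, note that the stopped process is a nonnegative supermartingale with initial value $\Phi_{u_0,\ell_0}(\mathbf{0})=1$ (your reading of $\TEx{\cdot}$ as $\Ex{(\cdot)\mathbf{1}_{G_j}\mid\mathcal{F}_j}$ is exactly what the paper's definition of $\TEx{\cdot}$ says), and invoke the maximal inequality to get the uniform bound $\Phi_j\leq 10$ on a single event of probability at least $4/5$; both bullets then reduce to deterministic arithmetic, and your constants check out ($K\cdot\frac{4\eps^2}{qn^{2/q}10^{1/q}}\approx 32\,n^{1/q}$ comfortably exceeds $2(2n)^{1/q}$, and $\frac{3\cdot 20^{1/q}qn}{2\eps^2}\leq \frac{10qn}{\eps^2}$ for $q\geq 10$). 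What your route buys: a complete, self-contained argument for the second bullet (including the final-iteration overshoot, which the paper only waves at), and the uniform control $\Phi_j\leq 10$ throughout the run, which is precisely the kind of statement the proof of \thmref{graph} later assumes. What the paper's route buys: it needs only Markov's inequality and linearity of expectation, at the cost of the Cauchy--Schwarz detour and a per-$k$ failure probability rather than a uniform-in-time one.
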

\begin{proof}
Notice that after iteration $j$
the barrier gap $u_j - \ell_j$ is increased by
\begin{align*}
\Delta_{u, j } - \Delta_{\ell, j } & = \frac{4\eps^2}{q}\ \frac{N_j}{ \sum_{i=1}^m R_i(A_j, u_j, \ell_j)} \\
& = \frac{4\eps^2}{q}\  \frac{1}{n^{2/q}}\cdot
\min\left\{\lambda_{\min}(u_{j}I-A_j),\lambda_{\min}(A_j-\ell_j I) \right\}\\
& \geq\frac{4\eps^2}{q}\  \frac{1}{n^{2/q}}\cdot \left( \Phi_{u_j, \ell_j}(A_j)\right)^{-1/q}.
\end{align*}
Since the algorithm finishes within  $k$ iterations if \[\sum_{j=0}^{k-1} (\Delta_{u,j} -\Delta_{\ell,j}) \geq 2\cdot(2n)^{1/q},\] it holds that
\begin{align*}
\Pro{\mbox{algorithm finishes within $k$ iterations}} & \geq \Pro{ \sum_{j=0}^{k-1} (\Delta_{u,j} -\Delta_{\ell,j}) \geq 2\cdot(2n)^{1/q}  } \\
& \geq \Pro{ \sum_{j=0}^{k-1}  \frac{4\eps^2}{q n^{2/q}}\cdot \left( \Phi_{u_j, \ell_j}(A_j)\right)^{-1/q} \geq 2\cdot (2n)^{1/q}  }  \\
& = \Pro{ \sum_{j=0}^{k-1} \left( \Phi_{u_j, \ell_j}(A_j)\right)^{-1/q} \geq  \frac{q}{2\eps^2}\cdot \left( 2n^3 \right)^{1/q} } \\
& \geq \Pro{\sum_{j=0}^{k-1} \left( \Phi_{u_j, \ell_j}(A_j)\right)^{1/q} \leq 2\cdot \frac{k^2\eps^2}{q}\cdot \left( \frac{1}{2 n^3} \right)^{1/q} },
\end{align*}
where the last inequality follows from the fact that
\[
\left(\sum_{j=0}^{k-1} \left( \Phi_{u_j, \ell_j}(A_j)\right)^{-1/q} \right)
\cdot
\left(\sum_{j=0}^{k-1} \left( \Phi_{u_j, \ell_j}(A_j)\right)^{1/q} \right)
\geq k^2.
\]
By \lemref{concentration},  every picked matrix $W_j$ in iteration $j$ satisfies \[0\preceq W_j\preceq \frac{1}{2} \cdot \left(u_jI-A\right)\]with probability at least  $1-\frac{\eps^2}{100 q n}$, and with probability $9/10$ all matrices picked in $k=\frac{10qn}{\eps^2}$ iterations satisfy the condition above.
Also,  by \lemref{potendecrease} we have that
\begin{equation}\label{eq:expchain}
\TEx{\sum_{j=0}^{k-1} (\Phi_{u_j, \ell_j}(A_j))^{1/q} }=\sum_{j=0}^{k-1} \TEx{(\Phi_{u_j, \ell_j}(A_j))^{1/q} } \leq \sum_{j=0}^{k-1} \left( \TEx{\Phi_{u_j, \ell_j}(A_j)}\right)^{1/q} \leq k,
\end{equation}
since the initial value of the potential function is at most 1. Therefore,  it holds that
\begin{align*}
\lefteqn{\Pro{\mbox{algorithm finishes in more than $k$ iterations}}}\\
& \leq \Pro{\sum_{j=0}^{k-1} \left( \Phi_{u_j, \ell_j}(A_j)\right)^{1/q} \geq 2\cdot \frac{k^2\eps^2}{q}\cdot \left( \frac{1}{2 n^3} \right)^{1/q} } \\
& \leq \Pro{\sum_{j=0}^{k-1} \left( \Phi_{u_j, \ell_j}(A_j)\right)^{1/q} \geq 2\cdot \frac{k^2\eps^2}{q}\cdot \left( \frac{1}{2 n^3} \right)^{1/q} \text{ and }  \forall j:W_j\preceq \frac{1}{2} (u_jI-A_j)} \\
& \qquad + \Pro{\exists j: W_j\not\preceq \frac{1}{2} (u_jI-A_j)} \\
& \leq \frac{q}{2\cdot k\eps^2}\cdot \left(2n^3\right)^{1/q} + 1/10 \leq 1/5,
\end{align*}
where the  second last inequity follows from  Markov's inequality and
\eq{expchain}, and the last inequality follows by our choice of $k$. This proves the first statement.

Now for the second statement.
Notice that for every vector chosen  in iteration $j$, the barrier gap $\Delta_{u,j}-\Delta_{\ell,j}$ is increased on average by
\[
\frac{\Delta_{u,j}-\Delta_{\ell,j}}{N_j} = \frac{4\varepsilon^2}{q\sum_{i=1}^m R_i(A_j, u_j,\ell_j)}.
\]
To bound $R_i(A_j, u_j,\ell_j)$, let the eigenvalues of matrix $\mat{A_j}$ be $\lambda_1,\cdots,\lambda_n$. Then, it holds that
\begin{align*}
\sum_{i=1}^{m} R_i(\mat{A_j},u_j,\ell_j) & =  \sum_{i=1}^{m} \vec{v}_i^{\rot} (u_j\mat{I}-\mat{A_j})^{-1}\mat{v}_i + \sum_{i=1}^{m} \vec{v}_i^{\rot}(\mat{A_j}-\ell_j \mat{I})^{-1}\vec{v}_i \\
& = \sum_{i=1}^n \frac{1}{u_j-\lambda_i} + \sum_{i=1}^n \frac{1}{\lambda_i- \ell_j}\\
 & \leq \left( \sum_{i=1}^n (u_j-\lambda_i)^{-q} + \sum_{i=1}^n (\lambda_i-\ell_j)^{-q} \right)^{1/q} (2n)^{1-1/q}\\
& = \left( \Phi_{u_j,\ell_j}(\mat{A_j}) \right)^{1/q}\cdot (2n)^{1-1/q}.
\end{align*}
Therefore, we have that
\begin{align}
\frac{\Delta_{u,j}-\Delta_{\ell,j}}{N_j} \geq \frac{4\eps^2}{q}\cdot\frac{1}{(2n)^{1-1/q}\cdot (\Phi_{u_j, \ell_j}(A_j))^{1/q}}.\label{eq:lbpot}
\end{align}

Let $v_1,\cdots, v_z$ be the vectors sampled by the algorithm, and
$v_j$ is picked in iteration $\tau_j$,  where $1\leq j\leq z$.
We first assume that the algorithm could check the ending condition after adding every single vector. In such  case, it holds that
\begin{align*}
\lefteqn{\Pro{\mbox{algorithm finishes after choosing $z$ vectors}}}\\
 & \geq \Pro{\sum_{j=1}^z \frac{4\eps^2}{q}\cdot\frac{1}{(2n)^{1-1/q} \cdot(\Phi_{u_{\tau_j}, \ell_{\tau_j}}(A_{\tau_j}))^{1/q}}\geq 2\cdot(2n)^{1/q}} \\
& = \Pro{\sum_{j=1}^z (\Phi_{u_{\tau_j}, \ell_{\tau_j} } (A_{\tau_j}))^{-1/q} \geq   qn/\eps^2 }.
\end{align*}
Following the same proof as the first part and noticing that in the final iteration the algorithm chooses at most $O(n)$ extra vectors, we obtain the second statement.
\end{proof}

\subsection{Proof of the Main Results\label{sec:mainproof}}

Now we analyze the runtime of the algorithm, and prove the main results. 
We first analyze the algorithm for sparsifying sums of rank-1 \textsf{PSD} matrices, and prove \thmref{general}.

\begin{proof}[Proof of \thmref{general}]
By \lemref{iterationneeded}, with probability at least $4/5$ the algorithm chooses at most $\frac{10qn}{\varepsilon^2}$ vectors,
 and by \lemref{condition} the condition number  of $A_k$ is at most $1+O(\eps)$,  implying that the  matrix $A_k$ is a $(1+O(\eps))$-approximation of $I$. These two results together prove that $A_k$ is a linear-sized spectral sparsifier.

For the runtime, \lemref{iterationneeded} proves that the algorithm finishes  in  $\frac{10q n^{3/q}}{\eps^2} $ iterations,
and it is easy to see that  all the required quantities in each iteration can be approximately computed in  $\tilde{O}(m\cdot n^{\omega-1})$ time using fast matrix multiplication. Therefore, the total runtime of the algorithm is  $\tilde{O}\left(\frac{q \cdot m }{\eps^2}\cdot n^{\omega-1+3/q}\right)$.
\end{proof}

Next we show how to apply our algorithm in the graph setting, and prove \thmref{graph}.
Let $L=\sum_{i=1}^m u_iu_i^{\rot}$ be the Laplacian matrix of an undirected graph $G$, where $u_iu_i^{\rot}$ is the Laplacian matrix of the graph consisting of  a single edge $e_i$. 
By setting
\[
v_i=L^{-1/2}u_i
\]
for $1\leq i\leq m$, it is easy to see that constructing a spectral sparsifier of $G$ is equivalent to sparsifing the matrix $\sum_{i=1}^m v_iv_i^{\rot}$.
We will present in the appendix almost-linear time algorithms to approximate the required quantities \[ \lambda_{\min}\left(u_jI-A_j\right),  \lambda_{\min}\left(A_j-\ell_j I\right), v_i^{\rot}\left(u_jI-A_j\right)^{-1}v_i, \mbox{\ and\ } v_i^{\rot}\left(A_j-\ell_j I\right)^{-1} v_i\] in each iteration, and this gives  \thmref{graph}.
\begin{proof}[Proof of \thmref{graph}]
By applying the same analysis as in the proof of \thmref{general}, we know that the output matrix $A_k$ is  a linear-sized spectral sparsifier, and it suffices to analyze the runtime of the algorithm.

  By
\lemref{concentration} and the Union Bound, with probability at least $9/10$ all the matrices picked in $k=\frac{10q n^{3/q}}{\eps^2}$ iterations  satisfy \[W_j\preceq \frac{1}{2} (u_jI-A_j).\] Conditioning on the event, with constant probability  $\Ex{\Phi_{u_j, \ell_j}(A_j)}\leq2$
for all iterations $j$, and by Markov's inequality with high probability it holds that $\Phi_{u_j,\ell_j}(A_j)=O\left(\frac{qn}{
\eps^2}\right)$ for all iterations $j$.

On the other hand, notice that 
it holds for any $1\leq j\leq n$ that
\[
(u-\lambda_j)^{-q} \leq \sum_{i=1}^n (u-\lambda_i)^{-q} < \Phi_{u,\ell}(\mat{A}),\]
which implies that $\lambda_j< u- \left( \Phi_{u,\ell}(\mat{A}) \right)^{-1/q}$. Similarly, it holds that $\lambda_j> \ell+\left( \Phi_{u,\ell}(\mat{A})  \right)^{-1/q}$ for any $1\leq j\leq n$.
Therefore, we have that
\[
\left( \ell_j+ O\left(  \left(\frac{\eps^2}{qn} \right)^{1/q} \right) \right)\mat{I}\prec \mat{A_j}\prec \left( u_j- O\left(  \left(\frac{\eps^2}{qn} \right)^{1/q} \right)\right)\mat{I}.
\]
Since both of $u_j$ and $\ell_j$ are of the order $O(n^{1/q})$, we set  $\eta=O\left( (\eps/n)^{2/q} \right)$ and obtain that 
\[
(\ell_j + |\ell_j| \eta) I\prec A_j\prec (1-\eta)u_jI.
\]
Hence, we apply   \lemref{procedure2} and \lemref{procedure3} to  compute all required quantities in each iteration up to constant approximation in time
\[
\tilde{O}\left( \frac{m}{\eps^2\cdot\eta}\right)= \tilde{O}\left(\frac{m\cdot n^{2/q}}{\eps^{2+2/q}} \right).
\]
Since by \lemref{iterationneeded}  the algorithm finishes  in  $\frac{10q n^{3/q}}{\eps^2}$ iterations with probability at least $4/5$, the total runtime of the algorithm is
\[
\tilde{O}\left(\frac{q\cdot m\cdot n^{5/q}}{\eps^{4+4/q}}\right).
\]
\end{proof}


\section*{Acknowledgment}
This work was partially supported by NSF awards 0843915 and 1111109. Part of this work was done
while both authors were visiting the Simons Institute for
the Theory of Computing, UC Berkeley, and the second author was affiliated with the Max Planck Institute for Informatics, Germany.
We thank Zeyuan Allen-Zhu, Zhenyu Liao, and Lorenzo Orecchia for sending us their manuscript of \cite{zhu15}
and the inspiring talk Zeyuan Allen-Zhu gave at the Simons Institute for the Theory of Computing.
 Finally, we  thank Michael Cohen for pointing out a gap in a previous version of the paper and his fixes for
the gap, as well as Lap-Chi Lau for many insightful comments on improving the presentation of the paper.



%

\bibliographystyle{plain}

\bibliography{reference}

\section{Omitted Proofs}

\subsection{Estimates of the Potential Functions}

In this subsection we prove \lemref{uplem1}. We first list the following two lemmas, which will be used in our proof.
\begin{lem}[Sherman-Morrison Formula]\label{lem:woodbury}
Let $\mat{A}\in \mathbb{R}^{n\times n}$ be an invertible matrix, and $u,v\in\mathbb{R}^n$. Suppose that $1+v^{\rot}A^{-1}u\neq 0$. Then it holds that
\[
(\mat{A}+u v^\rot)^{-1} = \mat{A}^{-1} - \frac{\mat{A}^{-1} u v^\rot\mat{A}^{-1}}{1+v^\rot \mat{A}^{-1} u}.
\]
\end{lem}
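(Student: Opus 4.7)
The plan is to prove the Sherman--Morrison formula by direct verification: I will multiply the conjectured inverse against $\mat{A}+uv^\rot$ on the left and check that the product equals the identity. Since both matrices are square and of equal dimension, verifying the identity on one side suffices (provided $\mat{A}+uv^\rot$ is invertible, which I address below).

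First I would expand
\[
(\mat{A}+uv^\rot)\left(\mat{A}^{-1} - \frac{\mat{A}^{-1} u v^\rot\mat{A}^{-1}}{1+v^\rot \mat{A}^{-1} u}\right)
\]
into four terms. The crucial simplification is that $\alpha \triangleq v^\rot \mat{A}^{-1} u$ is a \emph{scalar}, so it can be pulled out of any product it appears inside. Three of the four terms then share the common factor $uv^\rot\mat{A}^{-1}$, and over the denominator $1+\alpha$ they collapse as
\[
uv^\rot\mat{A}^{-1} \;-\; \frac{uv^\rot\mat{A}^{-1}}{1+\alpha} \;-\; \frac{\alpha\cdot uv^\rot\mat{A}^{-1}}{1+\alpha}
\;=\; uv^\rot\mat{A}^{-1}\left(1 - \frac{1+\alpha}{1+\alpha}\right) \;=\; \mathbf{0},
\]
leaving only the identity term. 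This is the entire content of the calculation.

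The one subtlety, which is the only thing I would be careful about, is justifying that $\mat{A}+uv^\rot$ is in fact invertible under the hypothesis $1+v^\rot\mat{A}^{-1} u \neq 0$. I would handle this with the matrix determinant lemma: $\det(\mat{A}+uv^\rot) = \det(\mat{A})\cdot(1+v^\rot\mat{A}^{-1} u)$, which is nonzero by assumption and by invertibility of $\mat{A}$. (Alternatively, since the right-hand side of the formula is a well-defined matrix whose left-product with $\mat{A}+uv^\rot$ yields $\mat{I}$, that itself exhibits a left-inverse and hence a two-sided inverse.) No deeper structure is needed; the proof is essentially three lines of algebra.
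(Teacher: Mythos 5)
The paper does not prove this lemma: it is stated without proof as a standard fact (alongside the Lieb--Thirring inequality) before being applied. Your direct-verification argument is correct and is the standard proof; the scalar $\alpha = v^\rot A^{-1} u$ factors out exactly as you describe, the three non-identity terms cancel, and either the matrix-determinant lemma or the one-sided-inverse-implies-two-sided argument for square matrices cleanly handles invertibility of $A + uv^\rot$. One trivial wording slip: you say you multiply ``on the left'' but then write $(A+uv^\rot)$ on the left of the conjectured inverse, which computes a right-inverse; for square matrices this is of course equivalent, so the argument is unaffected.
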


\begin{lem}[Lieb Thirring Inequality, \cite{lieb1976inequalities}]\label{lem:lieb_inq}
Let $\mat{A}$ and $\mat{B}$ be positive definite matrices, and $q\geq1$. Then it holds that
\[
\tr(B A B)^q \leq \tr(B^q A^q B^q).
\]
\end{lem}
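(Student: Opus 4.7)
This is the classical Lieb--Thirring inequality (Lieb 1976), extended to all real $q \geq 1$ by Araki. The plan is to proceed in three stages: a cyclic-trace reduction to a standard form, a log-majorization argument that handles integer $q$, and a complex-interpolation step to cover non-integer exponents uniformly.

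First I would simplify using the cyclic property of the trace together with the identity $\mathrm{spec}(MN) \setminus \{0\} = \mathrm{spec}(NM) \setminus \{0\}$ for any square matrices $M,N$. Applied to the two sides this yields
\[
\tr((BAB)^q) = \tr((B^2 A)^q), \qquad \tr(B^q A^q B^q) = \tr(B^{2q} A^q),
\]
so writing $C := B^2$ (still positive definite) the claim reduces to the standard Araki--Lieb--Thirring form
\[
\tr((CA)^q) \leq \tr(C^q A^q), \qquad q \geq 1.
\]

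Second, for integer $q$, I would prove the reduced inequality via log-majorization of eigenvalues. Since $CA$ is similar to the positive definite matrix $C^{1/2} A C^{1/2}$, its eigenvalues agree with the singular values of that matrix, and Horn's product inequality gives
\[
\prod_{i=1}^k \lambda_i(CA) \leq \prod_{i=1}^k \lambda_i(C)\, \lambda_i(A), \qquad 1 \leq k \leq n,
\]
with equality at $k = n$; equivalently $\lambda(CA) \prec_{\log} \lambda(C) \cdot \lambda(A)$ (componentwise product, sorted). Raising to the $q$-th power preserves log-majorization, and a Hardy--Littlewood--Polya argument lifts log-majorization of nonnegative sequences to weak majorization after composition with the convex increasing function $t \mapsto t^q$. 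Summing and identifying the resulting majorant with $\tr(C^q A^q)$ completes the integer case.

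The main obstacle is twofold: for non-integer $q$ the object $(CA)^q$ has to be defined carefully (since $CA$ is not self-adjoint), and the na\"ive bound coming from von Neumann's trace inequality $\tr(C^q A^q) \leq \sum_i \lambda_i(C^q)\lambda_i(A^q)$ goes in the wrong direction to close the gap in step two. To bypass both issues uniformly I would follow Araki's complex-interpolation argument: define the analytic family
\[
F(z) \triangleq \tr\!\bigl((A^{z/2} C^{z} A^{z/2})^{q/z}\bigr), \qquad \mathrm{Re}(z) \in [1,q],
\]
verify its analyticity and boundedness on the strip, and apply the Hadamard three-lines theorem to show that $\log F$ is convex on the real segment $[1,q]$. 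The endpoint evaluations give $F(1) = \tr((A^{1/2} C A^{1/2})^q) = \tr((CA)^q)$ and $F(q) = \tr(A^{q/2} C^q A^{q/2}) = \tr(C^q A^q)$, and combining convexity with a boundary norm estimate forces $F(1) \leq F(q)$, which is precisely the reduced inequality and hence, via the cyclic-trace reduction, the original claim.
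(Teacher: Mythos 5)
The paper does not prove this lemma; it is quoted as a black box from Lieb--Thirring~\cite{lieb1976inequalities}. Your proposal is a from-scratch derivation, which is a more ambitious route. The cyclic reduction in your first step is fine: since $BAB$ is similar to $B^2A$ and trace is cyclic, the claim does reduce to $\tr\bigl((CA)^q\bigr) \leq \tr(C^q A^q)$ with $C = B^2$.

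The difficulty is at the decisive step. Step 2 (Horn's inequality $\lambda(CA) \prec_{\log} \lambda(C)\lambda(A)$, lifted via Hardy--Littlewood--P\'olya) yields $\tr\bigl((CA)^q\bigr) \leq \sum_i \bigl(\lambda_i^{\downarrow}(C)\lambda_i^{\downarrow}(A)\bigr)^q$, but --- as you yourself observe in the ``obstacle'' paragraph --- this majorant lies on the \emph{wrong} side of $\tr(C^q A^q)$ by von Neumann's trace inequality, so step 2 does not close even for integer $q$. Step 3 is meant to repair this, but the three-lines theorem gives only log-\emph{convexity} of $F$ on $[1,q]$, i.e.\ $F(t) \leq F(1)^{1-\lambda}F(q)^{\lambda}$ for interior $t = (1-\lambda)\cdot 1 + \lambda\cdot q$; this is vacuous at the endpoints and does not deliver the needed \emph{monotonicity} $F(1) \leq F(q)$. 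The additional ``boundary norm estimate'' you gesture at is exactly the missing content. Araki's actual proof first establishes the operator-norm inequality $\bigl\|(C^{1/2}AC^{1/2})^q\bigr\| \leq \bigl\|C^{q/2}A^qC^{q/2}\bigr\|$ directly (via a bootstrapping power trick), then promotes it by antisymmetric tensor powers to the log-majorization $\lambda\bigl((C^{1/2}AC^{1/2})^q\bigr) \prec_{\log} \lambda\bigl(C^{q/2}A^qC^{q/2}\bigr)$, after which the trace inequality follows by the HLP lifting you already describe. The log-majorization you invoke in step 2 is genuinely weaker than this, and Araki's stronger statement is the heart of the theorem; it is what your proposal currently lacks.
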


\begin{proof}[Proof of \lemref{uplem1}]
Let $Y=A-\ell I$. By the Sherman-Morrison Formula~(\lemref{woodbury}), it holds that
\begin{equation}\label{eq:woodburyinstance}
\tr(\mat{Y}+ w w^{\rot})^{-q}=\tr\left( \mat{Y}^{-1} - \frac{\mat{Y}^{-1} w w^{\rot}\mat{Y}^{-1}}{1 + w^{\rot}\mat{Y}^{-1} w} \right)^q.
\end{equation}
By the assumption of  $w^{\rot}\mat{Y}^{-1} w \leq \frac{\eps}{q}$, we have that
\begin{align}
\tr(\mat{Y}+w w^{\rot})^{-q} & \leq \tr\left( \mat{Y}^{-1} - \frac{\mat{Y}^{-1} w w^{\rot} \mat{Y}^{-1}}{1+\varepsilon / q} \right)^{q}\label{eq:ineq11}\\
& = \tr\left( \mat{Y}^{-1/2}\left( \mat{I} - \frac{\mat{Y}^{-1/2} w w^{\rot}\mat{Y}^{-1/2} }{1+\varepsilon/q} \right)\mat{Y}^{-1/2} \right)^q\nonumber\\
& \leq \tr\left( \mat{Y}^{-q/2}\left(\mat{I}- \frac{\mat{Y}^{-1/2} w w^{\rot}\mat{Y}^{-1/2} }{1+\varepsilon/q} \right)^q\mat{Y}^{-q/2} \right) \label{eq:ineq21}\\
& = \tr \left( \mat{Y}^{-q}\left( \mat{I}-\frac{\mat{Y}^{-1/2} w w^{\rot}\mat{Y}^{-1/2} }{1+\varepsilon/q} \right)^q  \right), \label{eq:eq21}
\end{align}
where \eq{ineq11} uses the fact that $\mat{A}\preceq \mat{B}$ implies that $\tr\left(\mat{A}^q\right)\leq \tr \left( \mat{B}^q	 \right)$, \eq{ineq21} follows from the Lieb-Thirring inequality (\lemref{lieb_inq}), and \eq{eq21} uses the fact that the trace is invariant under cyclic permutations.

Let 
\[
\mat{D}=\frac{\mat{Y} ^{-1/2} w w^\rot \mat{Y}^{-1/2}}{1+\varepsilon/q}.
\]
 Note that $0 \preceq D\preceq\frac{\varepsilon}{q}\cdot \mat{I}$, and
\begin{align*}
(I-D)^{q} & \preceq  I-qD+\frac{q(q-1)}{2}D^{2}\\
 & \preceq  I-\left(q-\frac{\varepsilon(q-1)}{2}\right)D
\end{align*}
Therefore, we have that
\begin{align*}
\left(I-\frac{\mat{Y}^{-1/2} w w^\rot \mat{Y}^{-1/2}}{1+\varepsilon/q}\right)^{q}
& \preceq I-\left(q-\frac{\varepsilon(q-1)}{2}\right)\frac{\mat{Y}^{-1/2} w w^\rot \mat{Y}^{-1/2}}{1+\varepsilon/q}\\
& \preceq I-\left(q-\frac{\varepsilon(q-1)}{2}\right)\left(1-\frac{\varepsilon}{q}\right)\mat{Y}^{-1/2} w w^\rot \mat{Y}^{-1/2}\\
& \preceq I-q\left(1-\frac{\varepsilon(q+1)}{2q}\right)\mat{Y}^{-1/2} w w^\rot \mat{Y}^{-1/2}\\
& \preceq I-q\left(1-\varepsilon\right) \mat{Y}^{-1/2} w w^\rot \mat{Y}^{-1/2}.
\end{align*}
This implies that
\[
\tr(Y + w w^{\rot})^{-q} \le
\tr\left( Y^{-q} \left( I-q(1-\eps) Y^{-1/2} ww^{\rot} Y^{-1/2} \right)   \right)
\leq \tr\left(Y^{-q}\right) - q(1- \varepsilon) \ w^\rot  Y^{-(q+1)} w,
\]
which proves the first statement.

Now for the second inequality.  Let $Z=uI-A$.
By the Sherman-Morrison Formula~(\lemref{woodbury}), it holds that
\begin{equation}\label{eq:woodburyinstance}
\tr(\mat{Z}- w w^{\rot})^{-q}=\tr\left( \mat{Z}^{-1} + \frac{\mat{Z}^{-1} w w^{\rot}\mat{Z}^{-1}}{1 - w^{\rot}\mat{Z}^{-1} w} \right)^q.
\end{equation}
By the assumption of  $w^{\rot}\mat{Z}^{-1} w \leq \frac{\eps}{q}$, it holds that
\begin{align}
\tr(\mat{Z}-w w^{\rot})^{-q} & \leq \tr\left( \mat{Z}^{-1} + \frac{\mat{Z}^{-1} w w^{\rot} \mat{Z}^{-1}}{1-\varepsilon / q} \right)^{q}\label{eq:ineq1}\\
& = \tr\left( \mat{Z}^{-1/2}\left( \mat{I} + \frac{\mat{Z}^{-1/2} w w^{\rot}\mat{Z}^{-1/2} }{1-\varepsilon/q} \right)\mat{Z}^{-1/2} \right)^q\nonumber\\
& \leq \tr\left( \mat{Z}^{-q/2}\left(\mat{I} + \frac{\mat{Z}^{-1/2} w w^{\rot}\mat{Z}^{-1/2} }{1-\varepsilon/q} \right)^q\mat{Z}^{-q/2} \right) \label{eq:ineq2}\\
& = \tr \left( \mat{Z}^{-q}\left( \mat{I} + \frac{\mat{Z}^{-1/2} w w^{\rot}\mat{Z}^{-1/2} }{1-\varepsilon/q} \right)^q  \right), \label{eq:eq2}
\end{align}
where \eq{ineq1} uses the fact that $\mat{A}\preceq \mat{B}$ implies that $\tr\left(\mat{A}^q\right)\leq \tr \left( \mat{B}^q	 \right)$, \eq{ineq2} follows from the Lieb-Thirring inequality (\lemref{lieb_inq}), and \eq{eq2} uses the fact that the trace is invariant under cyclic permutations.

Let \[\mat{E}=\mat{Z} ^{-1/2} w w^\rot \mat{Z}^{-1/2}.\]
Combing $E \preceq \frac{\varepsilon}{q} \cdot\mat{I}$ with the assumption that $q\geq10$ and $\varepsilon\leq1/10$, we
have that
\begin{align*}
\left(I+\frac{E}{1-\varepsilon/q}\right)^{q}
& \preceq I+ \frac{qE}{1-\varepsilon/q}+\frac{q(q-1)}{2}\left(1+\frac{\varepsilon/q}{1-\varepsilon/q}\right)^{q-2}\left(\frac{E}{1-\varepsilon/q}\right)^{2}\\
& \preceq I+q\left(1+1.1\frac{\varepsilon}{q}\right)E+1.4\frac{q(q-1)}{2}E^{2}\\
& \preceq I+q\left(1+0.3\varepsilon\right)E+0.7\varepsilon qE\\
& \preceq I+q\left(1+\varepsilon\right)E.
\end{align*}
Therefore, we have that
\[
\tr(Z - w w^{\rot})^{-q} \le \tr\left(Z^{-q}\right) + q(1+ \varepsilon) \ w^\rot  Z^{-(q+1)} w,
\]
which proves the second statement.
\end{proof}

\subsection{Implementation of the Algorithm}

In this section, we show that the algorithm for constructing graph sparsification runs in almost-linear time. Based on previous discussion, we only need to prove that, for any iteration $j$, the number
 of samples $N_j$ and $\{R_i(A_j,u_j,\ell_j)\}_{i=1}^m$ can be approximately computed in almost-linear time.
 By definition, it suffices to compute $\lambda_{\min}\left(u_jI-A_j\right)$, $\lambda_{\min}\left(A_j-\ell_j I\right)$, $v_i^{\rot}\left(u_jI-A_j\right)^{-1}v_i$, and $v_i^{\rot}\left(A_j-\ell_j I\right)^{-1} v_i$ for all $i$.  For simplicity we drop the subscript $j$ expressing the iterations in this subsection.
We will assume that  the following assumption holds on $A$. We remark that an almost-linear time algorithm for computing similar quantities was shown in \cite{zhu15}.

\begin{assumption}\label{assump1}
Let $L$ and $\tilde{L}$ be the Laplacian matrices of graph $G$ and its subgraph after reweighting.
Let $A=L^{-1/2}\tilde{L}L^{-1/2}$, and assume  that
\[
(\ell + |\ell|\eta)\cdot I \prec A\prec (1-\eta)u\cdot I
\]
holds
for some $0<\eta<1$.
\end{assumption}

\begin{lem}\label{lem:procedure1}
Under Assumption~\ref{assump1}, the following statements hold:
\begin{itemize}
\item We can construct a matrix $S_u$ such that
\[
S_u\approx_{\eps/10} (uI-A)^{-1/2},
\]
and $S_u=p(A)$ for a polynomial $p$ of degree $O\left(\frac{\log (1/\eps\eta)}{\eta}\right)$.
\item We can construct a matrix $S_{\ell}$ such that
 \[
S_{\ell} \approx_{\eps/10} (A-\ell I)^{-1/2}.
 \]
 Moreover,  $S_{\ell}$ is of the form $(A')^{-1/2} q((A')^{-1})$,where $q$ is a polynomial of degree  $O\left(\frac{\log (1/\eps\eta)}{\eta}\right)$ and $A'=L^{-1/2} L' L^{-1/2}$ for  some Laplacian matrix $L'$.
\end{itemize}
\end{lem}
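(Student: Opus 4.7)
Both matrices will be built by truncating the binomial series $(1-x)^{-1/2}=\sum_{k\geq 0}\binom{-1/2}{k}(-x)^{k}$ evaluated on an operator whose spectral radius is at most $1-\Omega(\eta)$, so the tail decays geometrically in the truncation degree. For $S_{u}$, I would first note that Algorithm~\ref{aaa} maintains $u_j\geq(2n)^{1/q}$ and $|\ell_j|\leq u_j$ throughout (since $u_j$ and $\ell_j$ are both non-decreasing from $u_0=-\ell_0=(2n)^{1/q}$ and the loop maintains $u_j-\ell_j\geq 2(2n)^{1/q}$), and combined with Assumption~\ref{assump1} this gives $\|A/u\|_{\mathrm{op}}\leq 1-\eta$. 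I would then set $S_u\triangleq p(A)$ where $p$ is the degree-$T$ truncation of the series, namely
\[
p(A)\;=\;u^{-1/2}\sum_{k=0}^{T}\binom{-1/2}{k}\bigl(-A/u\bigr)^{k}.
\]
Using the standard bound $\bigl|\binom{-1/2}{k}\bigr|=\Theta(k^{-1/2})$, the operator-norm error is $O\bigl((1-\eta)^{T}/(\sqrt{T}\eta)\bigr)$; choosing $T=O(\log(1/(\eps\eta))/\eta)$ is enough to obtain the multiplicative guarantee $S_u\approx_{\eps/10}(uI-A)^{-1/2}$.

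For $S_{\ell}$, I would set $L'\triangleq \tilde{L}+\max(0,-\ell)\cdot L$, which is a Laplacian as a nonnegative combination of Laplacians, giving $A'=L^{-1/2}L'L^{-1/2}=A+\max(0,-\ell)I$ and hence $A-\ell I=A'-\max(0,\ell)I$. If $\ell\leq 0$ then $A'=A-\ell I$ exactly, so I would simply take $q\equiv 1$ (degree zero) and $S_{\ell}=(A')^{-1/2}$. If $\ell>0$ then $A'=A\succ 0$ by Assumption~\ref{assump1}, and I would use
\[
(A-\ell I)^{-1/2}\;=\;(A')^{-1/2}\bigl(I-\ell(A')^{-1}\bigr)^{-1/2}
\]
together with a binomial expansion of the second factor in $\ell(A')^{-1}$. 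Since $\ell(A')^{-1}$ has spectral radius $\ell/\lambda_{\min}(A)\leq 1/(1+\eta)\leq 1-\eta/2$, truncating at degree $T=O(\log(1/(\eps\eta))/\eta)$ and defining $q(y)\triangleq\sum_{k=0}^{T}\binom{-1/2}{k}(-\ell y)^{k}$ produces the required $S_{\ell}=(A')^{-1/2}q((A')^{-1})\approx_{\eps/10}(A-\ell I)^{-1/2}$ by the same tail estimate as above.

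The main technical obstacle I anticipate is converting the additive operator-norm tail bound into the multiplicative form $\approx_{\eps/10}$ demanded by the lemma: the target matrix has condition number $\Theta(1/\sqrt{\eta})$ (the eigenvalues of $uI-A$ range over $[\Theta(u\eta),\Theta(u)]$), so one must push the tail below roughly $\eps\eta^{1/2}$ times the pointwise value, which is precisely what forces an extra $\log(1/\eta)$ inside the logarithm appearing in the degree. A secondary care point is the case split on the sign of $\ell$: the choice $c\triangleq\max(0,-\ell)$ simultaneously keeps $L'=\tilde{L}+cL$ a valid Laplacian, keeps $A'$ positive definite so that $(A')^{-1/2}$ is well-defined, and guarantees that $\ell(A')^{-1}$ has spectral radius bounded away from $1$ by $\Omega(\eta)$ in both sign regimes.
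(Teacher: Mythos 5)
Your proof is correct and takes essentially the same route as the paper's: truncate the binomial series for $(1-x)^{-1/2}$ to degree $T = O(\log(1/(\eps\eta))/\eta)$, set $S_u = u^{-1/2}p_T(u^{-1}A)$, and handle $S_\ell$ by the case split on the sign of $\ell$, which your single formula $L' = \tilde{L} + \max(0,-\ell)\cdot L$ neatly packages (the paper writes out the two cases $\ell>0$ with $A'=A$ and $\ell\leq 0$ with $A'=A-\ell I$ separately). One small remark on the ``technical obstacle'' you flagged: there is no need to pay for the $\Theta(\eta^{-1/2})$ condition number of $(uI-A)^{-1/2}$ when converting the additive tail bound to a multiplicative guarantee. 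Since $(1-x)^{-1/2} \geq 1$ on $[0,1-\eta]$, equivalently $(uI-A)^{-1/2} \succeq u^{-1/2}I$, the scalar tail bound $(1-\eta)^{T+1}/\eta$ is already a bound on the relative error, so requiring $(1-\eta)^{T+1}/\eta \leq \eps/10$ is enough; this is exactly how the paper arrives at the same degree $O(\log(1/(\eps\eta))/\eta)$ without any extra slack from the condition number.
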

\begin{proof}
By Taylor expansion, it holds that
\[
(1-x)^{-1/2}=1+\sum_{k=1}^{\infty} \prod_{j=0}^{k-1} \left( j+\frac{1}{2} \right)\frac{x^k}{k!}.
\]
We define for any $T\in\mathbb{N}$ that
\[
p_T(x)=1+\sum_{k=1}^T \prod_{j=0}^{k-1} \left( j+\frac{1}{2} \right)\frac{x^k}{k!}.
\]
Then, it holds for any $0<x<1-\eta$ that
\begin{align*}
p_T(x) \leq (1-x)^{-1/2} & = p_T(x) + \sum_{k=T+1}^{\infty}  \prod_{j=0}^{k-1} \left( j+\frac{1}{2}\right)\frac{x^k}{k!} \\
& \leq p_T(x) + \sum_{k=T+1}^{\infty} x^k \\
& \leq p_T(x) + \frac{(1-\eta)^{T+1}}{\eta}.
\end{align*}
Hence, it holds that
\[
(uI-A)^{-1/2}= u^{-1/2} (I- u^{-1}A)^{-1/2} \succeq u^{-1/2} p_T(u^{-1}A),
\]
and
\[
(uI-A)^{-1/2} \preceq u^{-1/2} \left( p_T(u^{-1}A) +\frac{(1-\eta)^{T+1}}{\eta}\cdot I \right),
\]
since $u^{-1}A\preceq (1-\eta) I$. Notice that $u^{-1/2}I \preceq (uI-A)^{-1/2}$, and therefore
\[
(uI-A)^{-1/2} \preceq  u^{-1/2} p_T(u^{-1}A) + \frac{(1-\eta)^{T+1}}{\eta}\cdot (uI-A)^{-1/2}.
\]
Setting $T=\frac{c\log (1/(\eps \eta))}{\eta}$ for some constant $c$ and defining $S_u=u^{-1/2} p_T(u^{-1}A)$ gives us that
\[
S_u\approx_{\eps/10} (uI-A)^{-1/2}.
\]

Now for the second statement. Our construction of $S_{\ell}$ is based on the case distinction ($\ell> 0$, and $\ell\leq 0$).

Case~(1): $\ell>0$. Notice that
\[
(A-\ell I)^{-1/2} =A^{-1/2} (I-\ell A^{-1})^{-1/2},
\]
 and
 \[
p_T(\ell A^{-1}) \preceq \left(I-\ell A^{-1} \right)^{-1/2} \preceq p_T\left(\ell A^{-1}\right) + \frac{(1-\eta/2)^{T+1}}{\eta/2}\cdot I.
 \]
Using the same analysis as before,  we have that
\[
A^{-1/2} (I-\ell A^{-1})^{-1/2} \approx_{\eps/10} A^{-1/2} p_T(\ell A^{-1}).
\]
By defining $S_{\ell} =A^{-1/2} p_T(\ell A^{-1})$, i.e., $A'=A$ and $q\left(  (A')^{-1}\right)=p_T(\ell A^{-1})$, we have that
\[
S_{\ell} \approx_{\eps/10} (A-\ell I)^{-1/2}.
\]

Case~(2): $\ell\leq 0$. We look at the matrix
\[
A-\ell I= L^{-1/2}\tilde{L}L^{-1/2} -\ell I=L^{-1/2} (\tilde{L}-\ell L) L^{-1/2}.
\]
 Notice that $\tilde{L}-\ell L$ is a Laplacian matrix, and
 hence this reduces to the case of  $\ell=0$, for which we  simply set $S_{\ell}=(A-\ell I)^{-1/2}$. Therefore, we can write $S_{\ell}$
as a desired form, where   $A'=A-\ell I$ and polynomial $q=1$.
\end{proof}

\lemref{procedure2} below shows  how to estimate $v_i^{\rot} (uI-A)^{-1} v_i$, and $v_i^{\rot} (A-\ell I)^{-1}v_i$, for all $v_i$ in nearly-linear time.

\begin{lem}\label{lem:procedure2}
Let $A = \sum_{i=1}^{m} v_i v_i^\rot$, and suppose that $A$ satisfies Assumption~\ref{assump1}. Then,
  we can compute
  $\{r_i\}_{i=1}^m$ and $\{t_i\}_{i=1}^m$ in $\tilde{O}\left(\frac{m}{\eps^2\eta}\right)$ time such that
  \[
  (1-\eps) r_i\leq  v_i^{\rot}(uI-A)^{-1}v_i \leq (1+\eps) r_i,\]
   and
   \[
   (1-\eps)t_i\leq v_i^{\rot}(A-\ell I)^{-1}v_i \leq (1+\eps)t_i.\]
\end{lem}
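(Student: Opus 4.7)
The plan is to reduce the estimation of each scalar $v_i^\rot (uI-A)^{-1} v_i$ to the computation of a squared norm, apply a Johnson--Lindenstrauss style dimension reduction to handle all $m$ edges simultaneously, and use the polynomial representations from \lemref{procedure1} together with fast Laplacian solvers to evaluate the resulting operator efficiently. First I would rewrite $v_i^\rot (uI-A)^{-1} v_i = \|(uI-A)^{-1/2} v_i\|^2$, and similarly $v_i^\rot (A-\ell I)^{-1} v_i = \|(A-\ell I)^{-1/2} v_i\|^2$. By \lemref{procedure1}, these are approximated to within a factor $(1 \pm \eps/10)$ by $\|S_u v_i\|^2$ and $\|S_\ell v_i\|^2$ respectively, where $S_u = p(A)$ and $S_\ell$ admits a similar representation in terms of Laplacian operators.

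Next I would sample a Johnson--Lindenstrauss matrix $Q \in \R^{k \times n}$ with $k = \Theta(\log m/\eps^2)$ (e.g.\ a Gaussian or $\pm 1$ sign matrix), which with high probability satisfies $(1-\eps/10)\|x\|^2 \leq \|Qx\|^2 \leq (1+\eps/10)\|x\|^2$ simultaneously for all $m$ vectors in $\{S_u v_i\}$ and $\{S_\ell v_i\}$ by a union bound. Defining the estimators $r_i \triangleq \|Q S_u v_i\|^2$ and $t_i \triangleq \|Q S_\ell v_i\|^2$ and composing the two $(1 \pm \eps/10)$ distortions yields the claimed $(1 \pm \eps)$ guarantee.

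For the runtime, I would exploit that $v_i = L^{-1/2} u_i$ where $u_i$ is the (sparse) signed incidence vector of edge $e_i$, and precompute the $k \times n$ matrices
\[
M_u \triangleq Q S_u L^{-1/2}, \qquad M_\ell \triangleq Q S_\ell L^{-1/2},
\]
so that $r_i = \|M_u u_i\|^2$ and $t_i = \|M_\ell u_i\|^2$. Each row of $M_u$ is obtained by applying $S_u = u^{-1/2} p_T(u^{-1} A)$ and then $L^{-1/2}$ to a row of $Q$. Since $A = L^{-1/2}\tilde{L}L^{-1/2}$, every matrix-vector product with $A$ reduces to two nearly-linear-time Laplacian solves (used to approximate $L^{-1/2}$ via standard polynomial/Chebyshev approximation) plus a multiplication by the sparse matrix $\tilde{L}$. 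The polynomial $p_T$ has degree $O(\log(1/\eps\eta)/\eta) = \tilde O(1/\eta)$, so each row of $M_u$ costs $\tilde O(m/\eta)$, and the full precomputation costs $k \cdot \tilde O(m/\eta) = \tilde O\!\left(m/(\eps^2 \eta)\right)$. The analogous statement for $M_\ell$ follows from the structure $S_\ell = (A')^{-1/2} q((A')^{-1})$ with $A' = L^{-1/2} L' L^{-1/2}$ a conjugated Laplacian, so applying $(A')^{-1}$ and $(A')^{-1/2}$ again reduces to Laplacian solves. Finally, once $M_u$ and $M_\ell$ are in hand, each $r_i$ and $t_i$ is evaluated in $O(k)$ time using the sparsity of $u_i$, contributing $O(mk) = \tilde O(m/\eps^2)$ over all edges.

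The main obstacle I expect is controlling the accumulated error from three approximations---the polynomial approximation of $(\cdot)^{-1/2}$ from \lemref{procedure1}, the inner Laplacian solves used to realise $L^{-1/2}$ at each degree of the polynomial, and the JL distortion---so that the final multiplicative error is bounded by $\eps$. The cleanest way is to take each inner approximation at precision $\eps/\polylog(n/\eps\eta)$, which only inflates polylog factors hidden in the $\tilde O(\cdot)$ notation, and to argue that the spectral-norm errors compose multiplicatively because $S_u, S_\ell, L^{-1/2}$ all commute with their approximations up to $(1\pm\eps/\polylog)$ in the $\preceq$ order, using Assumption~\ref{assump1} to keep the relevant spectra bounded away from $0$ and $u$ by a factor of $\eta$.
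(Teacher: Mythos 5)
Your high-level plan---reduce each quadratic form to a squared norm, apply Johnson--Lindenstrauss, and evaluate the resulting operator with Laplacian solvers---is the same as the paper's. But the specific way you set up the computation has a genuine gap in the runtime analysis. You propose $r_i = \|Q S_u L^{-1/2} u_i\|^2$ with $Q \in \R^{k\times n}$ acting on the vertex space, and to compute the rows of $M_u = Q S_u L^{-1/2}$ you say one can ``apply $S_u = u^{-1/2} p_T(u^{-1}A)$ and then $L^{-1/2}$'' where ``every matrix-vector product with $A$ reduces to two nearly-linear-time Laplacian solves (used to approximate $L^{-1/2}$ via standard polynomial/Chebyshev approximation).'' This step does not go through: a Laplacian solver gives you $L^{-1}x$ in nearly-linear time, not $L^{-1/2}x$. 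Approximating $L^{-1/2}$ by a polynomial in $L$ (or in $L^{-1}$) requires degree growing with $\sqrt{\kappa(L)}$, and nothing in Assumption~\ref{assump1} bounds the condition number of $L$ itself --- it only bounds the spectrum of $A$ relative to $u$ and $\ell$. Since $A = L^{-1/2}\tilde{L}L^{-1/2}$, each application of $A$ in your scheme also secretly invokes $L^{\pm1/2}$, so the problem is pervasive, not just at the outer layer.

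The paper avoids ever applying $L^{\pm 1/2}$ through two observations you are missing. First, by similarity, $p(A) = L^{1/2}\,p(L^{-1}\tilde{L})\,L^{-1/2}$, so
\[
S_u v_i = p(A)L^{-1/2}u_i = L^{1/2}\,p(L^{-1}\tilde{L})\,L^{-1}u_i,
\]
and $p(L^{-1}\tilde{L})L^{-1}$ can be applied using only $L^{-1}$ (a genuine Laplacian solve) and multiplications by the sparse matrix $\tilde{L}$. Second, the remaining $L^{1/2}$ prefactor is realized through the incidence-matrix factorization $L = B^{\rot}B$ with $B \in \R^{m\times n}$ sparse, since $\|L^{1/2}y\|^2 = y^{\rot}Ly = \|By\|^2$. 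Consequently the paper's JL matrix lives in $\R^{O(\log n/\eps^2)\times m}$, sketching the \emph{edge} space rather than the vertex space: one computes $QB$ (sparse-times-sparse), applies $p(L^{-1}\tilde{L})L^{-1}$ to each of its $O(\log n/\eps^2)$ rows via Laplacian solves, and then evaluates each $r_i$ by hitting the result with the sparse vector $u_i$. To repair your argument you would need to replace the ansatz $M_u = QS_uL^{-1/2}$ with $M_u = (QB)\,p(L^{-1}\tilde{L})\,L^{-1}$, and analogously for $S_\ell$, which is exactly what the paper does. The rest of your proposal --- the JL union bound, the $\tilde{O}(m/(\eps^2\eta))$ accounting once each row costs $\tilde{O}(m/\eta)$, and the final $O(k)$ per-edge inner products --- is fine and matches the paper.
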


\begin{proof}
Define $u_i = L^{1/2} v_i$ for any $1\leq i\leq m$. By \lemref{procedure1}, we have that
\begin{align*}
v_i^{\rot}(uI-A)^{-1} v_i &\approx_{3\eps/10} \| p(A)v_i \|^2 \\
& = \left\| p\left(L^{-1/2} \tilde{L} L^{-1/2} \right) L^{-1/2} u_i \right\|^2 \\
& =\left\| L^{1/2} p\left(L^{-1} \tilde{L}\right) L^{-1} u_i \right\|^2.
\end{align*}
Let $L=B^{\rot}B$ for some $B\in\mathbb{R}^{m\times n}$. Then, it holds that
\[
v_i^{\rot}(uI-A)^{-1} v_i \approx_{3\eps/10} \left\|B p\left(L^{-1}\tilde{L}\right) L^{-1}u_i \right\|^2.
\]
We invoke the Johnson-Lindenstrauss
Lemma and  find a random matrix $Q\in\mathbb{R}^{O(\log n/\eps^2)\times m}$: With high probability, it holds  that
\[
v_i^{\rot}(uI-A)^{-1} v_i \approx_{4\eps/10 }\left\| QBp\left(L^{-1}\tilde{L}\right) L^{-1} u_i \right\|^2.
\]
We
 apply  a nearly-linear time Laplacian solver to compute $\left\|QBp\left(L^{-1}\tilde{L}\right) L^{-1} u_i \right\|^2$ for all $\{u_i\}_{i=1}^m$ up to $(1\pm\eps/10)$-multiplicative error in time $\tilde{O}\left(\frac{m}{\eps^2\eta}\right)$. This gives the desired $\{r_i \}_{i=1}^m$.

The computation for  $\{t_i\}_{i=1}^m$ is similar. By \lemref{procedure1}, it holds for any $1\leq i\leq m$ that
\begin{align*}
v_i^{\rot}(A-\ell I)^{-1}v_i &\approx_{3\eps/10} \left\| (A')^{-1/2} q((A')^{-1})v_i \right\|^2 \\
& = \left\| (A')^{-1/2} q\left(L^{1/2} (L')^{-1}L^{1/2}  \right)L^{-1/2} u_i \right\|^2 \\
& = \left\| (A')^{-1/2} L^{-1/2} q(L(L')^{-1} )u_i \right\|^2.
\end{align*}
Let $L'=(B')^{\rot}(B')$ for some $B'\in\mathbb{R}^{m\times n}$. Then, it holds that
\begin{align*}
v_i^{\rot} (A-\ell I)^{-1} v_i &\approx_{3\eps/10} \left\|  (L')^{-1/2} q\left( L(L')^{-1} \right)u_i \right\|^2 \\
& =  \left\|  (L')^{1/2} (L')^{-1} q\left( L(L')^{-1} \right)u_i \right\|^2 \\
& = \left\|  B' (L')^{-1} q\left( L(L')^{-1} \right)u_i \right\|^2.
\end{align*}
We invoke the Johnson-Lindenstrauss Lemma and a nearly-linear time Laplacian solver as before
to obtain  required $\{t_i\}_{i=1}^m$. The total runtime is  $\tilde{O}\left(\frac{m}{\eta\eps^2}\right)$.
\end{proof}

\lemref{procedure3} shows that how to approximate $\lambda_{\min}(uI-A)$ and $ \lambda_{\min}(A-\ell I)$ in nearly-linear time.

\begin{lem}\label{lem:procedure3}
Under Assumption~\ref{assump1}, 
we can compute values
 $\alpha,\beta$ in  $\tilde{O}\left(\frac{m}{\eta\eps^3} \right)$ time such that
\[
(1-\eps)\alpha \leq \lambda_{\min}(uI-A) \leq (1+\eps) \alpha
\]
and
\[
(1-\eps)\beta \leq \lambda_{\min}(A-\ell I) \leq (1+\eps) \beta.
\]
\end{lem}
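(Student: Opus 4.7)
The plan is to reduce approximating $\lambda_{\min}(uI-A)$ and $\lambda_{\min}(A-\ell I)$ to approximating $\lambda_{\max}((uI-A)^{-1})$ and $\lambda_{\max}((A-\ell I)^{-1})$ respectively, via the identity $\lambda_{\min}(M) = 1/\lambda_{\max}(M^{-1})$, and then to run standard power iteration on these inverse matrices, implementing each iteration step by a polynomial-in-$A$ approximation along the lines of \lemref{procedure1}.

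First I would construct a polynomial $p$ with $p(A) \approx_{\eps/10} (uI-A)^{-1}$ of degree $T = O(\log(1/\eps)/\eta)$ by truncating the Taylor series $u^{-1}(1-x/u)^{-1} = \sum_{k\geq 0} u^{-k-1}x^k$; the accuracy estimate follows exactly as in the proof of \lemref{procedure1}, using $\|u^{-1}A\|\leq 1-\eta$ under Assumption~\ref{assump1}. The matrix $A = L^{-1/2}\tilde{L}L^{-1/2}$ is similar to $A' = L^{-1}\tilde{L}$, so $p(A)$ and $p(A')$ share their spectra, and applying $p(A')$ to a vector reduces to $T$ sequential fast Laplacian solves of $L$, each costing $\tilde{O}(m)$.

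Next I would run the power method on $p(A')$ for $k = O(\log n/\eps)$ iterations, starting from a random vector $z_0$ and updating $z_{j+1} = p(A')z_j$. The change of variables $y_j = L^{1/2} z_j$ turns this into ordinary power iteration on the symmetric PSD matrix $p(A)$, and the symmetric Rayleigh quotient $y_j^{\rot}p(A)y_j / y_j^{\rot} y_j$ becomes $z_j^{\rot} L z_{j+1} / z_j^{\rot} L z_j$, which can be evaluated using only matrix-vector products with $L$. Standard analysis of power iteration on a PSD matrix then yields, with constant probability, a $(1\pm\eps/5)$-approximation of $\lambda_{\max}(p(A))$; combined with the $(\eps/10)$-accurate polynomial approximation, this gives a $(1\pm\eps/2)$-approximation of $1/\lambda_{\min}(uI-A)$, and $\alpha$ is obtained as the reciprocal. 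The same scheme, using the polynomial from the second part of \lemref{procedure1} squared to approximate $(A-\ell I)^{-1}$, yields $\beta$.

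The total cost per quantity is $O(\log n/\eps)\cdot\tilde{O}(mT) = \tilde{O}(m/(\eta\eps))$, which matches (and is in fact slightly better than) the stated bound after absorbing logarithmic factors. The step I expect to be the most delicate is the bookkeeping of the combined error from the truncated Taylor series and the power iteration: I need the polynomial error to be relatively small compared to $\lambda_{\min}(uI-A)$, which is at least $\eta u$ under Assumption~\ref{assump1}, so that a $(1\pm\eps/10)$-approximation of $(uI-A)^{-1}$ translates cleanly into the same relative approximation of its top eigenvalue. After that, the power-method error adds in, and the final reciprocation preserves the overall $(1\pm\eps)$ accuracy.
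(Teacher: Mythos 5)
Your proposal is correct, but it takes a genuinely different route from the paper. The paper never runs power iteration: it estimates $\lambda_{\max}(S_u)$ for $S_u\approx_{\eps/10}(uI-A)^{-1/2}$ via the Schatten-norm proxy $\bigl(\tr(S_u^{2k})\bigr)^{1/2k}$ with $k=\log n/\eps$, which sandwiches $\lambda_{\max}(S_u)$ up to a factor $n^{1/2k}=1+O(\eps)$, and then computes that trace stochastically with Johnson--Lindenstrauss sketching plus Laplacian solves; this costs $O(\log n/\eps^2)$ sketch vectors times $\tilde{O}(mk/\eta)$ per vector, hence the stated $\tilde{O}(m/(\eta\eps^3))$. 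You instead approximate $(uI-A)^{-1}$ directly by a degree-$O(\log(1/(\eps\eta))/\eta)$ polynomial and run $O(\log n/\eps)$ steps of power iteration on the similar (non-symmetric but $L^{1/2}$-conjugate) operator $p(L^{-1}\tilde{L})$, reading off the Rayleigh quotient $z_j^{\rot}Lz_{j+1}/z_j^{\rot}Lz_j$; the Kuczy\'nski--Wo\'zniakowski analysis of randomized power iteration gives a $(1\pm\eps)$ relative estimate of the top eigenvalue with constant probability in that many steps, so your scheme is sound and in fact saves a factor of $\eps^{-2}$ over the paper's bound. Two small points to tighten: (i) your starting vector in the symmetrized coordinates is $y_0=L^{1/2}z_0$, whose overlap with the top eigenvector depends on the conditioning of $L$, so the iteration count should be $O(\log(n\kappa(L))/\eps)$ (absorbed into $\tilde{O}$, or avoided by boosting over $O(\log n)$ independent runs); and (ii) for $\beta$ you need the operator $S_\ell^2=(A')^{-1}q^2((A')^{-1})$ to be applicable in the $z$-coordinates, which works because $L^{-1/2}(A')^{-1}L^{1/2}=(L')^{-1}L$ is realized by a Laplacian solve for $L'$ --- exactly the conjugation the paper itself performs in both branches ($\ell>0$ and $\ell\leq 0$) of \lemref{procedure1}; this deserves an explicit line rather than a gesture.
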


\begin{proof}
By \lemref{procedure1}, we have that $S_u \approx_{\eps/10} (uI-A)^{-1/2}$. Hence, $ \lambda_{\max}(S_u)^{-2} \approx_{3\eps/10} \lambda_{\min}(uI-A)$, and it suffices to estimate $\lambda_{\max}(S_u)$.  Since
\[
\lambda_{\max}(S_u) \leq \left( \tr\left(S_u^{2k}\right) \right)^{1/2k} \leq n^{1/2k} \lambda_{\max}(S_u),
\]
by picking $k=\log n/\eps$ we have that $\left( \tr(S_u^{2k}) \right)^{1/2k} \approx_{\eps/2} \lambda_{\max}(S_u)$.  Notice that
\[
\tr\left( S_u^{2k} \right)  = \tr\left( p^{2k} \left(L^{-1/2}\tilde{L} L^{-1/2} \right)\right)  = \tr\left( p^{2k}\left(L^{-1} \tilde{L}\right)  \right).
\]
Set $\tilde{L} =\tilde{B}^{\rot}\tilde{B}$ for some matrix $\tilde{B}\in\mathbb{R}^{m\times n}$, and we have that $\tr\left( S_u^{2k} \right) = \tr\left( p^{2k}\left( \tilde{B}L^{-1}\tilde{B}^{\rot} \right) \right)$.
Since  we can apply $p^{k}\left( \tilde{B}L^{-1}\tilde{B}^{\rot} \right)$ to vectors in  $\tilde{O}\left(\frac{m}{\eta\eps}\right)$ time,
we  invoke the Johnson-Lindenstrauss Lemma and approximate $\tr\left(S_u^{2k}\right)$
in $\tilde{O}\left(\frac{m}{\eta\eps^3}\right)$ time.

We approximate $\lambda_{\min}(A -\ell I)$ in a similar way. Notice that
\begin{align*}
\tr\left(  S_{\ell}^{4k}\right) & = \tr \left( (A')^{-1/2} q((A')^{-1}) \right)^{4k} \\
& = \tr\left(q( (A')^{-1} ) (A')^{-1} q((A')^{-1} )\right)^{2k}.
\end{align*}
Let $z$ be a polynomial defined by $z(x)=xq^2(x)$ and $L'=(B')^{\rot} (B')$. Then, we have that
\begin{align*}
\tr(S^{4k}_{\ell}) & = \tr\left( z^{2k}((A')^{-1}) \right) = \tr\left( z^{2k}\left(L^{1/2} (L')^{-1}L^{1/2} \right) \right).
\end{align*}
Applying the same analysis as before, we can estimate the trace  in  $\tilde{O}\left(\frac{m}{\eta\eps^3} \right)$ time.
\end{proof}

\end{document}